\newtheorem{definition}{Definition}
\newtheorem{theorem}{Theorem}
\newtheorem{lemma}{Lemma}
\newtheorem{proposition}{Proposition}
\newtheorem{corollary}{Corollary}
\begin{document}
\title{Interruptible Algorithms for Multiproblem Solving}

\author{%
Spyros Angelopoulos%
\thanks{Sorbonne Universit\'e, CNRS, Laboratoire d'informatique de Paris 6, LIP6, F-75252 Paris, France. {\tt spyros.angelopoulos@lip6.fr}}
\and
Alejandro L\'opez-Ortiz
}

%
%

%
%
%

\newcommand{\braket}[2]{\langle #1,#2\rangle}

\newcommand{\far}{fair acceleration ratio}
\newcommand{\REMOVED}[1]{ }
\newcommand{\defi}{\textrm{def}}
\newcommand{\expo}{{\sc Exp}}
\newcommand{\perf}{\textrm{perf}}

\maketitle

\begin{abstract} 

In this paper we address the problem of designing an interruptible system 
in a setting in which $n$ problem instances, all equally important,
must be solved concurrently. The system involves scheduling executions of contract algorithms
(which offer a trade-off between allowable computation time and quality of the solution)
in $m$ identical parallel processors. When an interruption occurs, the system must 
report a solution to each of the $n$ problem instances.
The quality of this output is then compared to the best-possible algorithm that 
has foreknowledge of the interruption time and must, likewise, 
produce solutions to all $n$ problem instances.
This extends the well-studied setting in which only one problem instance 
is queried at interruption time.

In this work we first introduce new measures for evaluating the performance of interruptible systems in this setting. 
In particular, we propose the {\em deficiency} of a schedule as a performance measure that meets the requirements 
of the problem at hand. We then present a schedule whose performance we prove that is within a small factor from optimal in the 
general, multiprocessor setting. We also show several lower bounds on the deficiency of schedules on a single processor. 
More precisely, we prove a general lower bound of $(n+1)/n$, an improved lower bound for the two-problem setting  ($n=2$), and a tight lower
bound for the class of round-robin schedules. Our techniques can also yield a simpler, alternative proof of the main result of
Bernstein {\em et al.}~\cite{BFZ} concerning the performance of cyclic schedules in multiprocessor environments. 
\bigskip

\noindent
{\bf Keywords}: Anytime computation; contract algorithms; interruptible algorithms; acceleration ratio;
scheduling problems in Artificial Intelligence; performance measures in scheduling.

\end{abstract}

\section{Introduction}
\label{sec:introduction}

A designer of real-time systems should anticipate the situation in which there are limitations on the 
available execution time. Applications such as medical diagnosis systems, automated trading
systems and game-playing programs require that the system may be queried at any time during 
its execution, at which point a solution must be reported. 
{\em Anytime algorithms} occur precisely in such settings, namely in situations in which a computationally
difficult problem is addressed under uncertain running time availability. 
Such algorithms will produce an output whose quality improves as a function of the 
available computation time. 
Anytime algorithms were introduced by Horvitz~\cite{Hor.1987.reasoning},~\cite{Hor.1998.reasoning} 
and Dean and Boddy~\cite{DeaBod.1998.analysis} and arise in central AI problems such as
heuristic search, and planning under uncertainty~\cite{Zaimag96}.

Russel and Zilberstein~\cite{RZ.1991.composing} distinguish between two main classes 
of anytime algorithms. On the one hand, the
class of {\em interruptible algorithms} consists of algorithms
that can be interrupted at any point during their execution, and must always 
report their current (albeit not necessarily optimal) solution.  
On the other hand, the class of {\em contract algorithms} consists of algorithms 
which specify the exact amount of allowable computation time as part of their input.
Such algorithms must terminate their execution before a solution can be produced,
otherwise the output may be meaningless. 

Interruptible algorithms offer, by their definition, more flexibility; in contrast, contract algorithms are typically
much easier to design, implement and maintain~\cite{BPZF.2002.scheduling}. Thus, it is desirable, in general,
to be able to transform a contract algorithm to its interruptible variant. This can be addressed in an algorithm-specific
manner, but we are interested, instead, in ``black-box" techniques that are algorithm-independent.
Towards this end, a well-studied technique is by scheduling executions of contract algorithms,  
in a machine that consists either of a single or even multiple processors. 

More precisely, in the most general setting,  
we are presented with a set $P$ of $n$ problem instances which we want to solve, and for each problem we are 
given a contract algorithm for the said problem. In addition, 
we are given a system of $m$ identical processors on which we can schedule this sequence of contract algorithms. 
The goal is to devise an efficient schedule, that is a strategy that assigns interleaved executions 
of all contract algorithms on the processors.
In the standard setting, upon an interruption, a query for a problem in $P$ is issued. 
The algorithm must then report the solution of the (completed) contract algorithm with the longest execution time
for the queried problem, since the latter provides the best completed solution by the time of interruption.
Thus, we would like these ``lengths'' or durations of completed contracts
to be as large as possible, since the longer the execution time, the
better the quality of the solution returned by the contract algorithm (and thus by the interruptible
system as well). 

The standard performance measure for a schedule of contract algorithms is the  
{\em acceleration ratio}~\cite{RZ.1991.composing}. Informally, the measure describes the multiplicative increase in processor speed required
for the schedule to compensate for the lack of knowledge of the interruption time. 
More formally, let $l_{p,t}$ denote the length of the largest contract 
for problem $p$ completed by time $t$ in a schedule $X$. The acceleration ratio $\alpha(X)$ of schedule $X$ is then defined as
\begin{equation}
\alpha(X)= \sup_t \max_{p \in P} \frac{t}{l_{p,t}}. 
\label{eq:acc.ratio.definition}
\end{equation}

Thus, the accelration ratio is a worst-case measure that compares the quality of the solution returned 
by the schedule (that is, the quantity $l_{p,t}$) to an ideal, optimal algorithm that knows the interruption 
$t$ in advance and  dedicates a single processor in order to run a contract of length $t$ for problem $p$. 
It is worth noting that for $n=m=1$, i.e., for the setting of a single problem and a single processor, 
the problem of devising a schedule of minimum acceleration ratio is known in the context of online computation 
as the {\em online bidding} problem~\cite{ChrKen06} (though, to our knowledge, previous work has not identified this equivalence).

\paragraph*{Related work} \ 
Simulating interruptible algorithms by means of schedules of contract algorithms has been 
a topic of extensive study. 
Russell and Zilberstein~\cite{RZ.1991.composing} were the first to present such an explicit simulation. 
For the case of one problem instance and 
a single processor, they provided a schedule based on iterative doubling of contract lengths
for which the corresponding interruptible algorithm has acceleration ratio at most
four. Zilberstein {\em et al.}~\cite{ZilbersteinCC03} showed that in this
case the schedule is optimal, in the sense that no other schedule of better
acceleration ratio exists.

Zilberstein {\em et al.}~\cite{ZilbersteinCC03} addressed the generalization
of multiple problem instances (assuming a single available processor), and
Bernstein {\em et al.}~\cite{BPZF.2002.scheduling}  studied the generalization in which 
contracts for a single problem instance must be
scheduled on a set of multiple processors. 
For both cases, optimal schedules are derived.
Bernstein {\em et al.}~\cite{BFZ} addressed the problem in its full generality, namely the setting in which
$n$ problem instances are given and the schedule is implemented on $m$ processors.
In particular, they showed an upper bound of $\frac{n}{m} (\frac{m+n}{n})^{\frac{m+n}{m}}$ on the 
acceleration ratio; in addition, they
showed that the schedule is optimal for the class of {\em cyclic} schedules.
The latter is a somewhat restricted, but still very rich and intuitive class of schedules with round-robin characteristics.
This restriction was removed by L\'{o}pez-Ortiz
{\em et al.}~\cite{aaai06:contracts}, who showed that this acceleration ratio is indeed optimal
among all possible schedules. Angelopoulos {\em et al.}~\cite{soft-contracts} studied the
setting in which the interruptions are not absolute deadlines,
but instead an additional window of computational time may be provided.

The problem of devising schedules of optimal acceleration ratio has interesting parallels with another
well studied problem in both AI and Operations Research, namely the problem of searching for a hidden 
target in an environment that consists of a number of unbounded, concurrent lines: this is known as 
the {\em star search} or {\em ray search} problem. Bernstein {\em et al.}~\cite{BFZ} were the first to establish
connections between the two problems; more recently,~\cite{spyros:rays} 
explored further connections between these classes of problems 
in probabilistic, fault-tolerant and randomized settings. In both works, the acceleration ratio of the 
scheduling strategy is compared to the {\em competitive ratio}, which is the standard performance 
measure of search strategies.


\paragraph*{Contribution} \ 
The central observation that motivates our work is that 
the acceleration ratio becomes problematic, as a performance measure, when at interruption time the 
algorithm is required to return a solution to {\em all $n$ problems in $P$} instead of only 
to a specific queried problem. This arises, for instance, in systems which involve parallel executions
of different heuristics (and at interruption time, the best heuristic is chosen). Another example is
a medical diagnostic system which must perform concurrent evaluations for a number of medical issues.
Here, the decision of the expert may very well have to take into account all such evaluations.

Suppose that we are given a set of $n$ problems (indexed $0,\ldots ,n-1)$ and a set $M$ of $m$ identical processors
(indexed $0,\ldots ,m-1$).
Suppose, in particular, that $n>m$, 
and for the purposes of illustrating our argument, say that $n \gg m$. 
Note that it is not feasible for any ideal, optimal algorithm 
(that is, an algorithm with advance knowledge of the interruption time $t$)
to schedule $n$ contracts of length $t$ to $m$ processors, since $n>m$. 
In a sense, if we applied the acceleration ratio to this domain, 
we would compare the performance of an interruptible algorithm which is expected 
to make progress on all $n$ problems to an algorithm which only makes optimal progress on at most $m$:
such a comparison is rather not fair.
This shortcoming was noticed by Zilberstein {\em et al}~\cite{ZilbersteinCC03}, who
define a scaled-down variant of the acceleration ratio for the case of $n$ problems and one processor as the quantity 
$ \sup_t \max_{p \in P} \frac{t/n}{l_{p,t}}$. This measure describes, informally, an even distribution
of the processor time among the $n$ problem instances for the optimal (offline) schedule.

A different way of arguing about the above shortcoming is to consider the scenario in which 
at interruption time $t$ a single problem $p \in P$ is queried. Naturally, the interruptible algorithm
does not know neither $t$ or $p$ in advance, but then the optimal offline algorithm should be oblivious 
of $p$ as well, otherwise it would make progress only on $p$ while ignoring all other problems in $P$.
This suggests that the offline optimal algorithm implicit in the definition of the acceleration ratio  
is overly powerful, and better measures are needed for the setting that we study.


Motivated by the above observations, in this paper we address the problem of designing 
interruptible algorithms using schedules of executions of contract algorithms, 
assuming that $m$ identical processors are available, and $n$ problem instances, 
all equally significant, must be solved. 
We begin by considering 
measures alternative to the acceleration ratio (Section~\ref{sec:defs}), and we propose
the {\em deficiency} of a schedule as our measure of choice. 
In Section~\ref{sec:exponential} we present a schedule whose deficiency is very small 
and rapidly decreasing in the ratio $n/m$
(in contrast, the acceleration ratio of every schedule is known 
to approach infinity when $n/m \rightarrow \infty$). More precisely, we show analytically that its deficiency is at most
3.74 if $n\leq m$, and at most 4, if $n>m$. A numerical evaluation provides even smaller values. 

Even though the deficiency of the proposed schedule is small, we provide further theoretical justification for 
our choice of schedule. Namely, in Section~\ref{sec:lower} we present several lower bounds on the deficiency of 
schedules in the single-processor setting ($m=1$). More precisely, we prove a general lower bound of $(n+1)/n$, 
an improved bound for the two-problem setting ($n=2$), and a tight lower bound for round-robin schedules. We also
remark that the schedule is optimal for the setting $n=m=1$. The proofs are based on techniques originally developed 
in the context of search theory~\cite{gal03:rendezvous} that have also been applied to scheduling problems~\cite{aaai06:contracts,soft-contracts},
and which allow us to relate the performance of schedules with arbitrary contract lengths to that of schedules with exponentially increasing lengths. 
As a further illustration of the applicability of these techniques, we give a much simpler, alternative proof of the main result 
in~\cite{BFZ}, namely we identify the optimal acceleration ratio that can be achieved by cyclic schedules, in the multi-processor setting.

As a last remark for this section, it is worth noting that revisiting the definition of an ideal, optimal algorithm
(and introducing new measures that capture this weakening) is an often encountered concept. 
As an illustrative example, a multitude of measures alternative to the competitive ratio have been introduced in the context
of the analysis of online algorithms (see e.g., the survey~\cite{survey}). These measures were motivated by the observation that 
the concept of an offline optimal algorithm is often overly powerful, with the undesirable side-effect that many online 
algorithms are often deemed theoretically optimal, though very inefficient in practice. As a second example, we note that
for the problem of multi-target searching in a star,~\cite{oil,hyperbolic} consider a weakening of the optimal algorithm which knows 
the distance of the targets from the original position of the searcher, but not the precise ray on which each target is located. 
In both examples, the introduction of new measures gives rise to new algorithmic ideas and new techniques for analysis.

\section{Problem formulation and comparison of measures}
\label{sec:defs}

Consider an (infinite)  schedule $X$ of executions of contract algorithms
(also called {\em contracts}). For an interruption time $t$ we denote by
$l(X,p,t)$ the length of the longest contract 
for problem $p \in P$ which is finished by time $t$ in $X$ (or simply $l_{p,t}$ when $X$ is 
implied from context). We make the canonical assumption that at interruption time 
at least one contract per problem has already completed its execution.

We need to formalize the question: what is the best way to exploit the 
available resources (i.e., processors), in order to solve the set of problems $P$? Towards this end,
consider a schedule $Y$, which, in contrast to $X$, is finite: more precisely, $Y$ 
schedules $n$ distinct contracts, one for each problem in $P$ (if $Y$ schedules more
than one contract per problem, than we can transform $Y$ to a schedule $Y'$ which is at least
as good as $Y$ by keeping only the largest contract per problem that appears in $Y$). 
Each contract in $Y$ is scheduled in one of the $m$ processors in $M$. We require that $Y$
is {\em feasible with respect to $t$}, in the sense that the {\em makespan} of $Y$, namely the total
sum of contract lengths on the most loaded processor used by $Y$
does not exceed $t$. Let ${\cal Y}_t$ denote the
class of all schedules $Y$ with the above properties. We will be calling $Y$ an {\em offline solution}
since it relies on advance knowledge of $t$.

Having defined ${\cal Y}_t$, we need a measure of how a schedule $Y \in {\cal Y}_t$ compares 
to $X$, which will also dictate which is the {\em best} schedule in ${\cal Y}_t$ compared to $X$.
First observe that the acceleration ratio is not an
appropriate measure for our setting, since under it
the optimal offline schedule $Y$ for interruption $t$ dedicates all its resources to the contract 
that is worked on the least by $X$  while failing to produce an answer for all other problems. 
This results in a large acceleration ratio which however does not truly reflect the quality of $X$
(effectively, the optimal solution ``cheats'' by ignoring all but one problem instances, 
which is not acceptable in our setting).

Another alternative would be to compare the smallest contract completed by $X$ 
to the smallest contract completed by $Y$, by time $t$.
We will need some preliminary definitions first.
Let $S_X^t$ denote the set $\{ l(X, j,t)| j \in [0,n-1]\}$ (that is, the set of the largest contracts per problem 
completed by time $t$) and let $S_X^t(i)$ denote
the $i$-th smallest element of $S_X^t$. Similarly, for $Y \in {\cal Y}_t$ let $S_Y^t$ denote 
the set of $n$ contracts in $Y$, and $S_Y^t(i)$ be the $i$-th smallest contract in $Y$, respectively
(ties are resolved arbitrarily). 

Formally, we define the {\em performance ratio of $X$ with respect to $Y$ at time $t$} as:
\begin{equation}
\perf(X,Y,t)=\frac{\min_i S_Y^t(i)}{\min_i S_X^t(i)}= 
\frac{S_Y^t(1)}{S_X^t(1)}.
\end{equation}
The performance ratio of $X$ at time $t$ is then defined as
\[
\perf(X,t)=\sup_Y \perf(X,Y,t),
\]
where $Y$ is a feasible schedule wrt $t$. 
Last, the performance ratio of $X$ is
defined as $\perf(X)=\sup_t \perf(X,t)$.

The first observation is that under this measure there exists an ``optimal'' offline schedule
 in which all contracts have the same length: here, by ``optimal'' we mean 
 a schedule $Y \in {\cal Y}_t$ against which the performance ratio of $X$ is maximized. 
Indeed, given any offline schedule $Y$, consider any
schedule $Y'$ such that the length of all its contracts is $\min_i S_Y^t(i)$. 
Note that such a feasible $Y'$ exists, 
since all contracts in $Y$ are at least that long, and $Y$ itself is feasible. Then it follows
from the definition that $\perf(X,Y,t)=\perf(X,Y',t)$. 
We can show a close relationship of the performance ratio to the acceleration
ratio for the standard setting (namely, when we seek the solution only 
to the queried problem). 

\begin{lemma} There is a schedule $X$ such that for any arbitrary interruption time $t$
\[ \perf(X,t) = \left\{\begin{array}{ll}
               \frac{n}{m}\left(\frac{m+n}{n}\right)^{\frac{m+n}{m}} 
                                \quad\quad\quad & \mbox{for $m\geq n$} \\
               \frac{n}{m}\frac{1}{\lceil n/m \rceil} 
               \left(\frac{m+n}{n}\right)^{\frac{m+n}{m}}
                               & \mbox{for  $m < n$.}
                  \end{array} \right.
\]
Furthermore, $X$ is optimal with respect to this measure.
\label{lemma:performance.ratio}
\end{lemma}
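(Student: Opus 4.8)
The plan is to relate the performance ratio directly to the acceleration ratio, and then invoke the known tight bounds on the latter. \textbf{Step 1 (the best offline schedule).} By the observation preceding the lemma we may assume the optimal $Y\in{\cal Y}_t$ has all $n$ of its contracts of a common length $\ell$. Such a $Y$ is feasible with respect to $t$ precisely when the makespan of the most balanced assignment of $n$ equal jobs of size $\ell$ to $m$ processors --- namely $\lceil n/m\rceil\cdot\ell$ --- is at most $t$; hence the largest attainable value of $S_Y^t(1)$ is $t/\lceil n/m\rceil$ (for $m\ge n$ this is simply $t$, the offline solution placing one length-$t$ contract on each of $n$ distinct processors). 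Therefore
\[
\perf(X,t)=\frac{\sup_{Y\in{\cal Y}_t}S_Y^t(1)}{S_X^t(1)}=\frac{t/\lceil n/m\rceil}{\min_{p\in P}l(X,p,t)} .
\]

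\textbf{Step 2 (reduction to the acceleration ratio).} The elementary identity $\max_{p\in P} t/l(X,p,t)=t/\min_{p\in P} l(X,p,t)$ rewrites the display above as $\perf(X,t)=\frac{1}{\lceil n/m\rceil}\max_{p\in P} t/l(X,p,t)$, and taking the supremum over $t$ gives $\perf(X)=\alpha(X)/\lceil n/m\rceil$ for \emph{every} schedule $X$. Consequently minimizing $\perf$ is the same as minimizing the acceleration ratio, and the optimum of $\perf$ equals $\alpha^{*}/\lceil n/m\rceil$, where $\alpha^{*}$ is the optimal acceleration ratio for $n$ problems on $m$ processors. By Bernstein {\em et al.}~\cite{BFZ} together with L\'opez-Ortiz {\em et al.}~\cite{aaai06:contracts} we have $\alpha^{*}=\frac{n}{m}\left(\frac{m+n}{n}\right)^{(m+n)/m}$, witnessed by an explicit optimal schedule $X$; choosing this $X$ yields the two claimed expressions (with $\lceil n/m\rceil=1$ when $m\ge n$), and the ``furthermore'' is immediate since $\perf(X')=\alpha(X')/\lceil n/m\rceil\ge\alpha^{*}/\lceil n/m\rceil$ for any other $X'$.

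\textbf{Main obstacle.} The delicate point is the assertion that the stated value is reached for \emph{every} interruption time $t$, and not merely as a supremum. For this I would take $X$ to be the concrete optimal schedule of~\cite{BFZ}, whose contract lengths grow geometrically and are interleaved cyclically across the $m$ processors and the $n$ problems; by its scale-invariant structure $\max_{p} t/l(X,p,t)$ attains (in the limit) the value $\alpha^{*}$ along a geometrically spaced family of interruption times, and one must verify --- this being exactly the analysis we import from~\cite{BFZ,aaai06:contracts} --- that no interruption time does strictly better, so that after dividing by $\lceil n/m\rceil$ the equality of the lemma holds uniformly in $t$. Everything else is routine bookkeeping with the two displayed identities above.
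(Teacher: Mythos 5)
Your proof follows essentially the same route as the paper's: you argue that the best offline schedule uses $n$ equal-length contracts of length $t/\lceil n/m\rceil$, so that $\perf(X)=\alpha(X)/\lceil n/m\rceil$, and then you import the optimal acceleration-ratio value from~\cite{BFZ} and~\cite{aaai06:contracts}, exactly as the paper does (with the two cases $m\geq n$ and $m<n$ merged into the single factor $\lceil n/m\rceil$). The ``main obstacle'' you flag --- equality at every interruption time $t$ rather than only in the supremum --- is a looseness present in the lemma's phrasing that the paper's own proof also passes over by appealing to the structure of the schedule from~\cite{BFZ}, so your treatment is no weaker than the original.
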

\begin{proof}
Consider first the case 
$m\geq n$,  then an optimal offline schedule 
consists of executing one contract of length $t$ per problem, each on its own processor
and hence $ \perf(X,t)= \frac{t}{\min_i S_X(i)}$. 
Note then that $\perf(X)=\sup_t \perf(X,t)$ 
is precisely the definition of the acceleration ratio $\alpha(X)$, for which the results
of~\cite{BFZ} and~\cite{aaai06:contracts} show the optimal value
of $\frac{n}{m}\left(\frac{m+n}{n}\right)^{\frac{m+n}{n}} $.

Next, suppose that $m < n$, then every offline schedule $Y$ is such that there exists at least
one processor in which at least $\lceil n/m \rceil$ contracts are scheduled.
As argued earlier, there exists an optimal offline schedule in which all contracts have the same
length.
It follows then that there is an optimal offline schedule with contract lengths equal to $t/\lceil n /m\rceil$.
Therefore,
$ \perf(X,t)= \frac{t/\lceil n /m\rceil}{\min_i S_X(i)}$
and $\perf(X)=\sup_t \perf(X,t)$ can then be described as
$\frac{1}{\lceil n/m\rceil} \cdot \alpha(X)$,
and the lemma follows. 
\end{proof}

We note that for the case of one processor ($m=1$) and $n$ problems, Lemma~\ref{lemma:performance.ratio}
shows that the performance ratio of the optimal schedule matches the measure 
proposed by~\cite{ZilbersteinCC03}, as mentioned in Section~\ref{sec:introduction}.

It is also not difficult to show that, unlike the optimal acceleration ratio (which is bounded by a function linear in $\frac{n}{m}$),
the optimal performance ratio is bounded by a small constant; in other words, the schedule in the proof of Lemma~\ref{lemma:performance.ratio} 
is very efficient under this refined measure. 
\begin{proposition}
The performance ratio of the schedule in the proof of Lemma~\ref{lemma:performance.ratio} is bounded
by $2e$, if $n>m$, and by 4 , if $n \leq m$. 
\label{lemma:performance.bounded}
\end{proposition}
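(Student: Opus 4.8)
The plan is to bound the closed form for $\perf(X)$ supplied by Lemma~\ref{lemma:performance.ratio} as a function of the real ratio $r=n/m$, treating the regimes $n\le m$ and $n>m$ separately. The only non-trivial ingredients needed are the classical inequality $(1+1/r)^{r}<e$, valid for every $r>0$, together with the fact that $x\ln x\to 0$ as $x\to 0^{+}$, which lets the functions considered below extend continuously to the relevant endpoint.

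For the case $n\le m$, Lemma~\ref{lemma:performance.ratio} gives $\perf(X)=\frac{n}{m}\left(\frac{m+n}{n}\right)^{\frac{m+n}{m}}$. Substituting $r=n/m\in(0,1]$ and using $\frac{m+n}{n}=1+\frac1r$ and $\frac{m+n}{m}=1+r$, this equals $f(r):=(1+r)^{1+r}/r^{r}$. I would then show that $f$ is increasing on $(0,1]$: taking logarithms, $\ln f(r)=(1+r)\ln(1+r)-r\ln r$, whose derivative is $\ln(1+r)+1-(\ln r+1)=\ln\frac{1+r}{r}>0$ for all $r>0$. Consequently $\perf(X)=f(n/m)\le f(1)=2^{2}=4$, with equality precisely for the balanced instance $n=m$.

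For the case $n>m$, Lemma~\ref{lemma:performance.ratio} gives $\perf(X)=\frac{n}{m\lceil n/m\rceil}\left(\frac{m+n}{n}\right)^{\frac{m+n}{m}}$. Since $\lceil n/m\rceil\ge n/m$, the leading factor is at most $1$, so $\perf(X)\le\left(\frac{m+n}{n}\right)^{\frac{m+n}{m}}=\left(1+\frac1r\right)^{1+r}=\left(1+\frac1r\right)^{r}\left(1+\frac1r\right)$ with $r=n/m>1$. The first factor is strictly less than $e$, and since $r>1$ the second is strictly less than $2$; hence $\perf(X)<2e$.

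I do not anticipate a genuine obstacle: once Lemma~\ref{lemma:performance.ratio} is in hand, each case reduces to a one-line elementary estimate. The only point requiring a little care is the case $n\le m$, where the crude bound used in the other case is too weak (since $2e>4$), so one really does need the monotonicity of $f$ on $(0,1]$ both to obtain the constant $4$ and to identify $n=m$ as the extremal configuration. (Incidentally, the same monotonicity argument applied to $(1+1/r)^{1+r}$ on $r>1$ would sharpen the $n>m$ bound to $4$ as well, but $2e$ is all that is claimed here.)
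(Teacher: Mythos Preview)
Your proof is correct and follows essentially the same approach as the paper: for $n>m$ the arguments are identical, and for $n\le m$ the paper factors $\perf(X)=(1+n/m)(1+m/n)^{n/m}$ and bounds each factor separately by $2$, which is equivalent to your monotonicity computation since $f(r)=(1+r)^{1+r}/r^{r}=(1+r)(1+1/r)^{r}$. Your closing remark that the $n>m$ bound can also be sharpened to $4$ is a correct and worthwhile observation not made in the paper.
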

\begin{proof}
In the case $m \geq n$, we have 
\begin{eqnarray}
\perf(X)\! 
= \left(1+\frac{n}{m} \right ) \left(1+\frac{m}{n}\right)^{\frac{n}{m}}  \leq 2 \cdot 2=4. \nonumber
\end{eqnarray}

In the case $m<n$ we have that 
\begin{eqnarray}
\perf(X) \leq
\left(1+\frac{m}{n} \right ) \left(1+\frac{m}{n}\right)^{\frac{n}{m}} \leq 2 \cdot e=2e. \nonumber
\end{eqnarray}
In either case, $\perf(X) \leq 2e$.
\end{proof}

Figure~\ref{fig:perf} shows the plot of $\perf(X)$ as a function of the ratio $\frac{n}{m}$ for $n>m$,
and assuming, for simplicity, that $m$ divides $n$. 
Note how the performance of the schedule is between 4 and $e$, and decreases rapidly as $n/m$ increases. 
When $m \geq n$, $\perf(X)$ decreases in a similar manner, as $m/n$ increases, and takes values between 4 and 1.

\begin{figure}[ht]
\begin{minipage}[b]{1\linewidth}
\centering
    \includegraphics[scale=0.25]{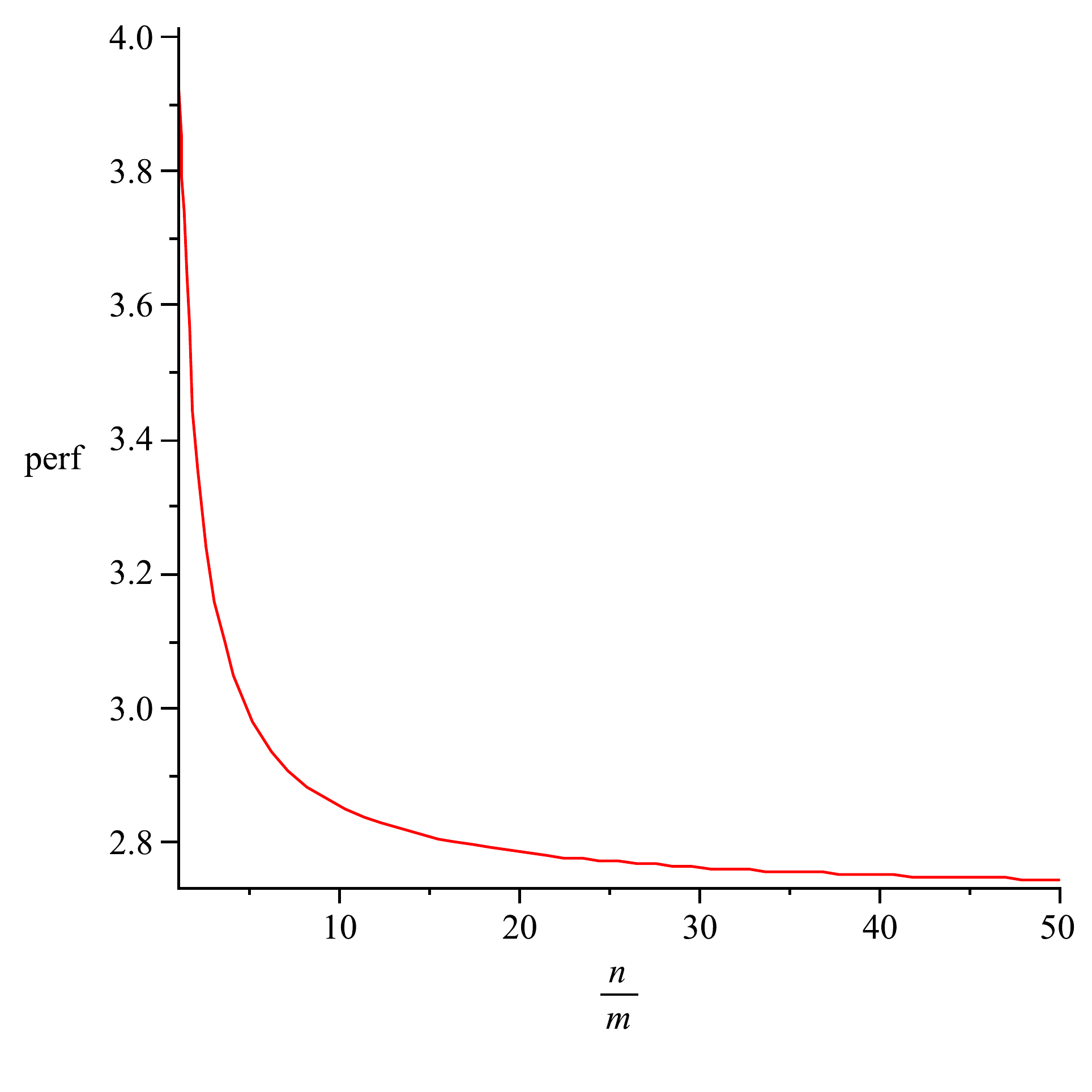}
\end{minipage}
\caption{\label{fig:perf} Plot of function $\perf(X)$, assuming $m$ divides $n$.}
\end{figure} 

While the performance ratio seems a better candidate for a measure in the context of our 
problem than the acceleration ratio, it is far from being the best possible choice. Note that according to this 
measure, every contract in the optimal offline solution may have fixed length, namely 
$t/\lceil n /m\rceil$. While it is guaranteed that the smallest contract in $S_X^t$ indeed
does not exceed $t/\lceil n /m\rceil$, the solution produced by $X$ may be such that there
exist several contracts in $S_X^t$ which exceed this length. More formally, there may exist
$j$ such that $S_X^t(j) > S_Y^t(j)$ for the optimal offline solution $Y$. This is clearly
undesirable, since it becomes difficult to argue that 
the optimal offline solution $Y$ at time $t$ is indeed better than the solution produced
by the interruptible algorithm at time $t$, even though, supposedly, $Y$ is optimal.

The above motivates the need for defining a further measure, one which takes into account the intuitive
expectation that the optimal offline solution should perform better than the 
interruptible algorithm {\em on each problem}. To accomplish this, we allow the offline
solution to observe the behavior of $X$ at each point in time $t$, and then produce
an appropriate ``optimal'' offline solution, tailored to the specifications of our problem.
In a sense we allow the offline solution sufficient power in choosing its schedule, while at the same
time we require that it produces solutions to all problem instances. This yields a measure 
which we call {\em deficiency}, and which
is: i) consistent with the requirements of the problem; and ii) powerful enough, in the sense
that if an algorithm performs well with respect to the new measure, then there are very strict
guarantees about its performance.

Formally, we say that $Y$ is {\em at least as good as $X$}, denoted by
$Y \geq X$, if and only if $S_Y^t(i) \geq S_X^t(i)$, for all $i \in [1,n]$. Then for a schedule
$Y \in {\cal Y}_t$ with $Y\geq X$, we say that the {\em deficiency of $X$ wrt $Y$} 
for interruption $t$ is defined as 
\begin{equation}
\defi(X,Y,t)=\min_i \frac{S_Y^t(i)}{S_X^t(i)}
\label{eq:deficiency}
\end{equation}
The {\em deficiency of $X$ given $t$} is then defined as 
\begin{equation}
\defi(X,t)=\sup_{Y\in {\cal Y}_t, Y\geq X} \defi(X,Y,t)
\label{eq:deficiency2}
\end{equation}
We define the deficiency of $X$ simply as
\begin{equation}
\defi(X)=\sup_t \ \defi(X,t)
\label{eq:deficiency3}
\end{equation}

While~\eqref{eq:deficiency} gives a formal definition of the deficiency, we will obtain an alternative
definition that provides us with more flexibility in terms of the analysis of actual schedules. 
To this end, we will relate this measure to the {\em makespan} of a schedule. 
\begin{definition}
Suppose that we are given $m$ identical processors and a set $S$ of $n$ jobs, where each job has a certain size (or length). 
For a given schedule of $S$ (i.e., for an assignment of $S$ to the processors), the {\em makespan} of the schedule is the load of the 
least loaded processor in this schedule, where the load of a processor 
is defined as the sum of the sizes of the jobs assigned to the said processor. We denote by $OPT(S)$ the optimal makespan of $S$,
among all possible schedules. 
\label{def:makespan}
\end{definition}
We will first show how to select a canonical 
representative from the set of schedules of optimal deficiency.
\begin{lemma}
Given schedule $X$ and interruption t, there exists an optimal schedule $Y$ with the following properties: 
\begin{enumerate}
\item $Y$ is such that $S_Y^t(i)=d\cdot S_X^t(i)$, for all $1\leq i \leq n$
and $d=\textup{def}(X,t)\geq 1$.
\item $d$ is the largest possible value such that $S_Y^t$ can be feasibly scheduled in $M$.
i.e., the makespan of the corresponding schedule is exactly $t$. 
\end{enumerate}
\label{lemma:basic}
\end{lemma}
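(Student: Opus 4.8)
Write $d=\textup{def}(X,t)$. The plan is to exhibit an explicit optimal offline solution $Y^{*}$ already of the required proportional form and then read off properties~1 and~2 from its construction; the key reduction is to collapse the supremum in~\eqref{eq:deficiency2} to a one‑parameter problem governed by the optimal makespan of the job set $S_X^{t}$.

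First I would verify that the class $\{Y\in{\cal Y}_t:Y\ge X\}$ is non‑empty and, consequently, that $d\ge 1$. For each problem $p\in P$ the contract of length $l(X,p,t)$ is run by $X$ inside the window $[0,t]$ on some processor of $M$; on any fixed processor the (at most $n$) selected contracts occupy pairwise disjoint time subintervals of $[0,t]$, so the load they impose there is at most $t$. Hence the $n$ contracts $l(X,p,t)$, $p\in P$, assigned to processors exactly as in $X$, form a feasible offline solution that equals, and in particular is $\ge$, $X$. Since every $Y\ge X$ satisfies $\textup{def}(X,Y,t)=\min_i S_Y^{t}(i)/S_X^{t}(i)\ge 1$, this gives $d\ge 1$; it also shows that, with $OPT(S_X^{t})$ the optimal makespan of the job set $S_X^{t}$ on $m$ processors (Definition~\ref{def:makespan}), one has $OPT(S_X^{t})\le t$, hence $t/OPT(S_X^{t})\ge 1$.

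Next I would establish the two inequalities $d\le t/OPT(S_X^{t})$ and $d\ge t/OPT(S_X^{t})$. For the upper bound, take any $Y\in{\cal Y}_t$ with $Y\ge X$ and put $c=\textup{def}(X,Y,t)$, so $S_Y^{t}(i)\ge c\cdot S_X^{t}(i)$ for all $i$; assigning, for each $i$, a job of size $c\cdot S_X^{t}(i)$ to the processor carrying the $i$‑th smallest contract of $Y$ produces a schedule of $\{c\cdot S_X^{t}(i)\}_{i=1}^{n}$ whose every processor load is no larger than in $Y$, hence at most $t$, so $c\cdot OPT(S_X^{t})\le t$ and $c\le t/OPT(S_X^{t})$. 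For the matching lower bound, let $Y^{*}$ schedule the $n$ jobs of sizes $\bigl(t/OPT(S_X^{t})\bigr)\cdot S_X^{t}(i)$, $1\le i\le n$, according to a makespan‑optimal assignment on $M$; its makespan is $\bigl(t/OPT(S_X^{t})\bigr)\cdot OPT(S_X^{t})=t$, so $Y^{*}\in{\cal Y}_t$; since $t/OPT(S_X^{t})\ge 1$ we get $S_{Y^{*}}^{t}(i)\ge S_X^{t}(i)$ for all $i$, so $Y^{*}\ge X$; and $\textup{def}(X,Y^{*},t)=\min_i\bigl(t/OPT(S_X^{t})\bigr)=t/OPT(S_X^{t})$.

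Combining, $d=t/OPT(S_X^{t})$ and the supremum is attained by $Y^{*}$, which is property~1. Property~2 then follows from the same description: for any $c>t/OPT(S_X^{t})$ every assignment of $\{c\cdot S_X^{t}(i)\}_i$ has makespan $c\cdot OPT(S_X^{t})>t$ and is infeasible, so $d$ is the largest feasible scaling factor, while the makespan‑optimal assignment underlying $Y^{*}$ has makespan exactly $t$. The one step that goes beyond bookkeeping is the reduction in the preceding paragraph: re‑using the processor assignment of $Y$ for the coordinate‑wise smaller job set $\{c\cdot S_X^{t}(i)\}_i$, together with the scaling identity $OPT(\{c\cdot S_X^{t}(i)\}_i)=c\cdot OPT(S_X^{t})$, is exactly what turns the optimization over all of ${\cal Y}_t$ into the scalar problem $c\in[1,\,t/OPT(S_X^{t})]$, and thereby simultaneously yields attainment, the value of $d$, and tightness of the makespan constraint at the optimum. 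A minor point worth making explicit is that $OPT$ is taken over assignments to the $m$ processors of $M$ with idle processors permitted (relevant when $n<m$), so that the construction of $Y^{*}$ is always legitimate.
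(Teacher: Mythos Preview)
Your argument is correct and actually goes a bit further than the paper's. The paper proves the lemma by a transformation-plus-contradiction argument: it starts from an assumed optimal $Y'$, shrinks each contract down to $d\cdot S_X^t(i)$ to obtain property~1, and then, for property~2, scales a hypothetical makespan-$<t$ optimum by $t/(t-\epsilon)$ to derive a contradiction. In particular, the paper never identifies the value of $d$; it only establishes the stated structural properties and defers the formula $d=t/OPT(S_X^t)$ to Corollary~\ref{corollary:deficiency.formula}. Your route is the reverse: you compute $d=t/OPT(S_X^t)$ directly by sandwiching, via the ``reuse $Y$'s assignment for the smaller job set $\{c\cdot S_X^t(i)\}$'' reduction together with the scaling identity for $OPT$, and then read off both properties from the explicit maximizer $Y^{*}$. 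This has two modest advantages: it does not presuppose that the supremum in~\eqref{eq:deficiency2} is attained (the paper's opening line ``Let $Y'$ be an optimal schedule'' tacitly assumes this), and it yields Corollary~\ref{corollary:deficiency.formula} for free. The paper's version, on the other hand, is shorter and does not need to introduce $OPT(S_X^t)$ at this stage.
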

\begin{proof}
Let $Y'$ be an optimal schedule given $X$ and $t$. It is not hard to transform $Y'$ to another optimal schedule $Y$
which satisfies property (1) as follows.  First, observe that from the definition of $\defi(X,t)$
it follows that $S_{Y'}^t(i) \geq d\cdot S_X(i)$. 
Now  for all $i$ such that  $S_{Y'}^t(i) >  d\cdot S_X^t(i)$, 
we can decrease the length of each contract $S_{Y'}^t(i)$ obtaining a new 
schedule $Y$ for which $S_Y^t(i) = d\cdot S_X^t(i)$. Clearly, $Y$ is a feasible optimal schedule as well, for the
given $X$.
For property (2) assume otherwise, i.e. that the makespan of $Y$ is not $t$. Since $Y$ is feasible with 
respect to $t$, by definition we have that the makespan 
of $Y$ must equal $t-\epsilon$ for some $\epsilon>0$. 
Now consider a new schedule $Y''$ identical to $Y$ except that
every contract length is multiplied by $t/(t-\epsilon)$. The makespan of $Y''$ is 
$t$, and thus the deficiency of $X$ is at least
\[
\defi(X,Y'',t) = \min_i \{S_{Y''}(i)/S_X(i)\} = dt/(t-\epsilon),
\]
which contradicts the fact that $\defi(X,t)=d$.
\end{proof}

Lemma~\ref{lemma:basic} implies the following corollary which establishes the relation between the deficiency and 
the makespan. 
\begin{corollary}
$\textup{def}(X,t)=\frac{t}{OPT(S_X^t)}$, where $OPT(S_X^t)$ is the minimum makespan for scheduling 
$n$ jobs in $m$ processors, with job $i$ having size (length) equal to $S_X^t(i)$.
\label{corollary:deficiency.formula}
\end{corollary}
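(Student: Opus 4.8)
The plan is to read the corollary off directly from Lemma~\ref{lemma:basic}, the only extra ingredient being that the optimal makespan is a positively homogeneous function of the job sizes: under any fixed assignment of a job set $S$ to the $m$ processors, scaling every job by a factor $d>0$ scales the load of every processor by $d$, hence scales the most loaded processor's load by $d$, hence scales the minimum-over-assignments makespan by $d$; that is, $OPT(\{d\cdot s : s\in S\})=d\cdot OPT(S)$.

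First I would invoke Lemma~\ref{lemma:basic} to fix a canonical optimal offline schedule $Y\in{\cal Y}_t$ with $S_Y^t(i)=d\cdot S_X^t(i)$ for all $i\in[1,n]$, where $d=\defi(X,t)\ge 1$. Feasibility of $Y$ with respect to $t$ means precisely that the $n$ jobs of sizes $d\cdot S_X^t(1),\dots,d\cdot S_X^t(n)$ admit an assignment to the $m$ processors in which every processor's load is at most $t$, i.e. $OPT(\{d\cdot S_X^t(i)\}_i)\le t$. This inequality is in fact tight: if it were strict, rescaling every contract of $Y$ by $t/OPT(\{d\cdot S_X^t(i)\}_i)>1$ would still yield a feasible offline schedule, now with uniform ratio strictly larger than $d$, contradicting $d=\defi(X,t)$ (this is exactly the rescaling argument already used to establish property~(2) of Lemma~\ref{lemma:basic}). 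Hence $OPT(\{d\cdot S_X^t(i)\}_i)=t$, and applying homogeneity gives $d\cdot OPT(S_X^t)=t$, i.e. $\defi(X,t)=d=t/OPT(S_X^t)$.

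For completeness I would also verify the reverse direction, confirming the value is attained: the offline schedule with contract sizes $\tfrac{t}{OPT(S_X^t)}\,S_X^t(i)$ has optimal makespan $\tfrac{t}{OPT(S_X^t)}\cdot OPT(S_X^t)=t$ by homogeneity, hence is feasible with respect to $t$ and satisfies $Y\ge X$ (note $t/OPT(S_X^t)\ge 1$, since assigning to each problem the processor of $X$ on which its longest contract completed by time $t$ was run gives an assignment of $S_X^t$ with makespan at most $t$), and it witnesses $\defi(X,t)\ge t/OPT(S_X^t)$.

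There is no genuine obstacle here — the statement is essentially a reformulation of Lemma~\ref{lemma:basic} in the language of Definition~\ref{def:makespan}. The only points requiring a moment's care are the homogeneity of the optimal makespan and the ``no slack'' observation that at the extremal value of $d$ the realising schedule uses the full interval $[0,t]$ on its most loaded processor; both are already implicit in the proof of Lemma~\ref{lemma:basic}.
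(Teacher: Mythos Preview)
Your proposal is correct and follows the same line as the paper, which simply states that the corollary is implied by Lemma~\ref{lemma:basic} without giving an explicit proof. You make explicit the two points the paper leaves implicit --- the positive homogeneity of $OPT(\cdot)$ and the fact that property~(2) of Lemma~\ref{lemma:basic} pins down the optimal makespan of $S_Y^t=d\cdot S_X^t$ as exactly $t$ --- and your added verification that $t/OPT(S_X^t)\ge 1$ (via the natural assignment induced by $X$) is a nice touch the paper omits.
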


Similarly to the acceleration ratio, one can think of the deficiency of a schedule as a resource-augmentation measure. Specifically,
it guarantees that if the contracts of $X$ are scheduled with a processor speedup of $\defi(X)$, then for any interruption $t$,
no schedule that knows $t$ can obtain a better feasible solution relatively to the one achieved by $X$.

\section{A near-optimal schedule for general $m$}
\label{sec:exponential}

In this section we propose a schedule for which we will show that the deficiency is bounded by a small constant, 
and is, thus, very close to optimal. 
More precisely, we will identify an efficient schedule that belongs in the class of {\em exponential} schedules. 
Informally, these are schedules in which problems are assigned to processors in round-robin order, and 
in which the lengths of the assigned contracts increases geometrically. 
\begin{definition}[\cite{BFZ}]
A schedule $X$ with contract lengths\footnote{With a slight abuse of notation, given a sequence of contract lengths $x_0,x_1, \ldots$, we will often refer to the contract of length $x_i$ as {\em the contract $x_i$}. This is only done for simplicity, and we do not require that contracts have pairwise different lengths.} $x_0,x_1, \ldots$ is called {\em round-robin} if it satisfies the following properties:
\begin{enumerate}
\item {\em Problem round-robin:} For every $i$, the contract $x_i$ is assigned to problem $i \bmod n$.
\item {\em Processor round-robin:} For every $i$, the contract $x_i$ is assigned to processor $i \bmod m$.
\end{enumerate}
\label{def:cyclic}
\end{definition}
\begin{definition}
A round-robin schedule $X$ with contract lengths $x_0,x_1, \ldots$ is called {\em exponential} if 
the contract length $x_i$ is equal to $b^i$, for some $b>1$.
\label{def:expo}
\end{definition}
Since an exponential schedule is fully described by its base, 
the remainder of this section is devoted to finding a value $b$ that yields a schedule of small  deficiency.  
Following~\cite{BFZ}, we will denote by
$G_k$ the finish time of the $k$-th contract of the schedule, in the round-robin order, whereas 
$L_k = b^k$ denotes the length of the $k$-th contract in this order.

The following lemma shows that in order to  evaluate the deficiency of any schedule (not necessarily
exponential) it suffices to consider only interruption times right before a contract terminates.
Let $G_c^-$ denote a time infinitesimally smaller than the finish time $G_c$ of contract $c \in X$. 
Recall also that according to Definition~\ref{def:makespan}, $OPT(S)$ 
denotes the optimal makespan for a schedule of a set $S$ of jobs in $m$ processors.  

\begin{lemma}
$\textup{def}(X) =\sup_{c \in X} \frac{G_C^-}{OPT(S_X^{G_c^-})}$. 
\label{lemma:def.exponential}
\end{lemma}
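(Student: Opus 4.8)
The plan is to show that the supremum defining $\textup{def}(X) = \sup_t \textup{def}(X,t) = \sup_t \frac{t}{OPT(S_X^t)}$ (using Corollary~\ref{corollary:deficiency.formula}) is attained, in the limit, at times of the form $G_c^-$. First I would observe that as $t$ ranges over an interval $[G_c, G_{c+1})$ between two consecutive contract completions, the set $S_X^t$ of largest-per-problem completed contracts is \emph{constant}: no contract finishes in the open interval, so the quantity $l(X,p,t)$ does not change for any problem $p$. Consequently $OPT(S_X^t)$ is constant on $[G_c, G_{c+1})$, while the numerator $t$ is strictly increasing. Hence on each such interval the ratio $\frac{t}{OPT(S_X^t)}$ is maximized by pushing $t$ as close as possible to the right endpoint $G_{c+1}$, i.e. by taking $t = G_{c+1}^-$.

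The second step handles the boundary points themselves. At $t = G_c$ exactly, the $c$-th contract has just completed, so $S_X^{G_c}$ may be a strictly larger multiset than $S_X^{G_c^-}$, and thus $OPT(S_X^{G_c}) \ge OPT(S_X^{G_c^-})$; since the numerator changes only by the infinitesimal gap, the ratio at $G_c$ is no larger than the supremum of the ratios at points $G_{c'}^-$. More carefully: for any $t$, letting $c$ be the largest index with $G_c \le t$, we have $S_X^t = S_X^{G_{c+1}^-}$ (the set stabilizes immediately after the $c$-th completion and stays fixed until the $(c+1)$-st), so $\frac{t}{OPT(S_X^t)} \le \frac{G_{c+1}^-}{OPT(S_X^{G_{c+1}^-})} \le \sup_{c' \in X} \frac{G_{c'}^-}{OPT(S_X^{G_{c'}^-})}$. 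Taking the supremum over $t$ gives $\textup{def}(X) \le \sup_{c} \frac{G_c^-}{OPT(S_X^{G_c^-})}$. The reverse inequality is immediate, since each $G_c^-$ is itself a (limit of) valid interruption time(s), so $\textup{def}(X) \ge \sup_c \textup{def}(X, G_c^-) = \sup_c \frac{G_c^-}{OPT(S_X^{G_c^-})}$.

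I do not anticipate a serious obstacle here; the argument is essentially the standard ``the worst case occurs just before a phase change'' observation, and the only thing to be careful about is the bookkeeping with the half-open intervals and the meaning of $G_c^-$ as an infinitesimal (or, formally, a left limit / supremum over $t < G_c$). One minor point worth stating cleanly is that $\textup{def}(X,t) = \frac{t}{OPT(S_X^t)}$ is well-defined and finite for all $t$ past the first round-robin cycle (the canonical assumption that at least one contract per problem has completed guarantees $OPT(S_X^t) > 0$), so the supremum is over a meaningful nonempty set. With that, the equality $\textup{def}(X) = \sup_{c \in X} \frac{G_c^-}{OPT(S_X^{G_c^-})}$ follows.
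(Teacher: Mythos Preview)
Your proposal is correct and follows essentially the same approach as the paper: both invoke Corollary~\ref{corollary:deficiency.formula}, observe that $S_X^t$ (and hence $OPT(S_X^t)$) is constant on each open interval between successive contract completions while the numerator $t$ increases, and conclude that the supremum is realized at times $G_c^-$. Your write-up is somewhat more explicit than the paper's in treating both directions of the equality and the boundary bookkeeping, but the underlying idea is identical.
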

\begin{proof}
Let $t_1,t_2, \ldots $ be the sequence of finish times of contracts in $X$, in increasing order.
For any $t$ such that $t_i<t<t_{i+1}=G_c$ (for some contract $c$) we have that
$S_X^t=S_X^{G_c^-}$ (since no contract finishes in $(t_i,t_{i+1})$). 
From Corollary~\ref{corollary:deficiency.formula}  we have 
\[
\defi(X,t)=\frac{t}{OPT(S_X^t)}=\frac{t}{OPT(S_X^{G_c^-})} 
\leq \frac{G_C^-}{OPT(S_X^{G_c^-})}.
\] 
The lemma follows from $\defi(X)=\max_t \defi(X,t)$. 
\end{proof}

Since $X$ is a round-robin, exponential schedule, Lemma~\ref{lemma:def.exponential}
implies that
\begin{equation}
\defi(X)= \sup_{k\geq 0} \frac{G_{n+k}^-}{OPT(\{L_k, \ldots L_{n+k-1}\})}
\label{eq:def.expo}
\end{equation}
where $L_i=b^i$ is the length of the $i$-th contract in the round-robin order. 
Eq.~\eqref{eq:def.expo} suggests that in order to bound the deficiency of $X$, one needs to
obtain a lower bound on the makespan of $n$ jobs, whose lengths are equal to $L_k, \ldots L_{n+k-1}$,
respectively. This is accomplished in the next lemma. 
\begin{lemma}
For $L_i=b^i$ it holds that 
\[
OPT(L_k, \ldots L_{n+k-1}) \geq \kappa \cdot b^k \frac{b^{n+m-1}-b^{(n-1) \bmod{m}}}{b^m-1},
\]
where $\kappa$ is defined as $\kappa=\max \left \{ \frac{1}{2-1/m}, \frac{b^m-1}{b^{m}} \right \}$.
\label{lemma:graham.application}
\end{lemma}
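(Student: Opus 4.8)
The plan is to reduce the statement to two classical lower bounds on the optimal makespan and then check, by a short but slightly delicate calculation, that their combination matches the claimed expression. I would first put the right-hand side into a more transparent form: writing $n-1=qm+r$ with $q=\lfloor (n-1)/m\rfloor$ and $r=(n-1)\bmod m$ (so that $q+1=\lceil n/m\rceil$), a geometric-series manipulation gives
\[
v:=b^k\,\frac{b^{n+m-1}-b^{(n-1)\bmod m}}{b^m-1}=b^{k+r}\,\frac{b^{(q+1)m}-1}{b^m-1}=\sum_{j=0}^{q}b^{k+r+jm}=b^{n+k-1}\cdot\frac{1-b^{-(q+1)m}}{1-b^{-m}} .
\]
It is also worth noting (for intuition, though the proof does not strictly need it) that $v$ is exactly the load of the busiest processor in the round-robin assignment of the jobs $L_k,\dots,L_{n+k-1}$ to the $m$ machines.

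Next I would invoke the two standard lower bounds on $OPT:=OPT(\{L_k,\dots,L_{n+k-1}\})$ (cf.\ Definition~\ref{def:makespan}): (i) $OPT\ge L_{n+k-1}=b^{n+k-1}$, since the longest job occupies some processor; and (ii) $OPT\ge\frac1m\sum_{i=k}^{n+k-1}b^i=\frac{b^k(b^n-1)}{m(b-1)}$, by averaging the total load over the $m$ processors. The averaging estimate together with $OPT\ge b^{n+k-1}$ is precisely the pair of bounds underlying Graham's $(2-1/m)$ guarantee for list scheduling, which is why I regard this as a ``Graham application''. From (i) and the last form of $v$ above, $OPT\ge b^{n+k-1}=v\cdot\frac{1-b^{-m}}{1-b^{-(q+1)m}}\ge v(1-b^{-m})=\frac{b^m-1}{b^m}\,v$, since $1-b^{-(q+1)m}\le 1$; so one of the two terms in $\kappa$ is handled at once.

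It then remains to show $OPT\ge\frac{1}{2-1/m}\,v=\frac{m}{2m-1}\,v$, and here I would split on the size of the base. If $\frac{b^m-1}{b^m}\ge\frac{m}{2m-1}$ (equivalently $b^m\ge\frac{2m-1}{m-1}$) this follows from the previous paragraph. Otherwise $b^m<\frac{2m-1}{m-1}$, and I would use bound (ii): substituting the closed form of $v$ and clearing denominators, the required inequality becomes an elementary one of the shape
\[
(2m-1)\sum_{i=0}^{n-1}b^i\ \ge\ m^2\,b^{r}\sum_{j=0}^{q}b^{jm},
\]
which I would prove by partitioning $\sum_{i=0}^{n-1}b^i$ into blocks of at most $m$ consecutive powers aligned with the ``special'' exponents $r,r+m,\dots,r+qm$, comparing the two sides block by block, and using $1+b+\dots+b^{m-1}\ge m$ together with the standing hypothesis $b^m<\frac{2m-1}{m-1}$ (with, when $q=1$, a direct comparison against $b^{n+k-1}$ for the sub-range $\frac{m}{m-1}\le b^m<\frac{2m-1}{m-1}$). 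Combining the two parts gives $OPT\ge\max\{\frac{b^m-1}{b^m},\,\frac1{2-1/m}\}\,v=\kappa\,v$, which is the claim.

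The step I expect to be the main obstacle is exactly this elementary inequality in the small-base case: the two lower bounds it balances become tight \emph{simultaneously} at the crossover $b^m=\frac{2m-1}{m-1}$ between the two expressions defining $\kappa$, so near that point one cannot afford lossy estimates — in particular the bound $1+b+\dots+b^{m-1}\ge m$ by itself is too weak, and one must exploit both the constraint $b^m<\frac{2m-1}{m-1}$ and (for small $q$) a finer sub-case split to close the gap.
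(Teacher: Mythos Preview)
Your computation of $v$ is correct and, as you note parenthetically, $v$ is exactly the makespan of the greedy (round-robin) assignment of the jobs $L_k,\dots,L_{n+k-1}$ to the $m$ machines; this is also precisely what the paper computes. Your derivation of the bound $OPT\ge \frac{b^m-1}{b^m}\,v$ from $OPT\ge b^{n+k-1}$ likewise matches the paper's.

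Where you diverge is in the second bound $OPT\ge \frac{1}{2-1/m}\,v$. The paper obtains this in one line: since $v$ \emph{is} the makespan of the greedy list-scheduling algorithm, Graham's theorem (greedy is a $(2-1/m)$-approximation) gives $v\le (2-1/m)\,OPT$ immediately, with no case analysis on $b$ and no elementary inequality to verify. You instead propose to prove this bound directly from the averaging lower bound, via a case split on whether $b^m\ge \frac{2m-1}{m-1}$ and, in the small-base case, a block-partition inequality that you yourself flag as the main obstacle and leave only sketched (with an extra sub-case split for small $q$). That programme can presumably be pushed through, but it is reinventing Graham's analysis for this particular instance --- indeed, the clean way to finish your argument is to observe that $v\le \frac{1}{m}\sum_{i=k}^{n+k-1}b^i+(1-\tfrac1m)\,b^{n+k-1}$ (an easy consequence of your block decomposition of $v$), after which your bounds (i) and (ii) give $v\le (2-1/m)\,OPT$ without any split on $b$.

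In short: your overall structure is sound, but you are working much harder than necessary on the second half. The paper's proof exploits the observation you made only in passing --- that $v$ is the greedy makespan --- and simply invokes Graham's approximation guarantee as a black box.
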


\begin{proof}
We will evaluate the makespan of the well-known greedy algorithm 
which schedules jobs of sizes $L_k, \ldots L_{n+k-1}$, in $m$ identical processors 
numbered $0, \ldots m-1$, assuming the jobs are considered in this particular
order (i.e., in increasing order of size). More precisely, the algorithm will assign each job  
to the machine of least current load. It is easy to see that the decisions of the greedy algorithm
are such that 
the job of size $L_{k+i}$
is scheduled on processor $i \bmod{m}$. Moreover, the incurred makespan
is determined by the total load of jobs scheduled on the 
same processor as job $L_{n+k-1}$, namely processor $(n-1) \bmod{m}$. 
Denote by $Gr(L_k, \ldots L_{n+k-1})$ the makespan of the greedy algorithm. We obtain
\begin{eqnarray}
&Gr&(L_k, \ldots L_{n+k-1}) = \sum_{i=0}^{\lfloor (n-1)/m \rfloor} 
b^{k+mi+(n-1) \bmod{m}} \nonumber \\
&=& b^k b^{(n-1) \bmod{m}} \sum_{i=0}^{\lfloor (n-1)/m \rfloor} b^{mi} \nonumber \\
&=& b^k \frac{b^{n+m-1}-b^{(n-1) \bmod{m}}}{b^m-1}.  \label{eq:graham.application}
\end{eqnarray}
To complete the proof, we need to argue that the greedy scheduling policy does not achieve a makespan
worse that $(1/\kappa)$ times the optimal makespan. Graham's fundamental theorem on the performance of
the greedy scheduling policy~\cite{Graham66} states that the greedy algorithm has an approximation 
ratio of $2-1/m$. Moreover, we know that the optimal makespan is at least $b^{n+k-1}$, which in 
conjunction with~(\ref{eq:graham.application}) yields that the greedy algorithm is also a $b^m/(b^m-1)$
approximation (for our specific instance). The lemma follows by combining the above 
two approximation guarantees. 
\end{proof}

We now proceed to bound the deficiency of the exponential schedule $X$. It is easy to show 
that for exponential schedules,
$G_{n+k}=\frac{b^{k+n+m}-b^{(k+n) \bmod{m}}}{b^m-1}$~\cite{BFZ}. 
Let $\lambda=1/\kappa=\min \left \{ 2-\frac{1}{m}, \frac{b^m}{b^m-1} \right \}$. 
Combining with~(\ref{eq:def.expo}) and Lemma~\ref{lemma:graham.application} we obtain 
\begin{eqnarray}
\defi(X) &\leq&
\lambda  \cdot \sup_{k \geq 0} 
\frac{b^{k+n+m}-b^{(k+n) \bmod{m}}}
{b^k(b^{n+m-1}-b^{(n-1) \bmod{m}})} \nonumber \\ 
&=& \lambda \cdot \sup_{k\geq 0}\frac{b^{n+m}-(b^{(k+n) \bmod{m}})/b^k}
{b^{n+m-1}-b^{(n-1) \bmod{m}}} \nonumber \\
&\leq& 
\lambda \cdot \frac{b^{n+m}}{b^{n+m-1}-b^\gamma},
\label{eq:expo.deficiency}
\end{eqnarray}
where $\gamma$ is defined as $(n-1) \bmod{m}$. 

We thus seek the value of $b$ that minimizes the RHS of~(\ref{eq:expo.deficiency}).
It is not clear that this can be done analytically, since the factor $\lambda$ depends on $b$,
and the derivative of the RHS does not have roots that can be identified analytically. 
Instead, let $f(b)$ denote the function $b^{n+m}/(b^{n+m-1}-b^\gamma)$. 
Then $f(b)$ is minimized for a 
value of $b$ equal to $\beta= (n+m-\gamma)^\frac{1}{n+m+\gamma-1}$. Let $\rho$ be such that 
$n-1=\rho m+\gamma$, then $\beta=(m(\rho+1)+1)^\frac{1}{m(\rho+1)}$. 
Observe also that $f(b)=1/(b^{-1}-b^{\gamma-n-m})=1/(b^{-1}-b^{-m(\rho+1)-1})$. Summarizing, 
for $b=\beta$ we obtain a schedule of deficiency
\begin{equation} 
\defi(X)\leq \min \left \{2-\frac{1}{m}, \frac{\beta^m}{\beta^m-1} \right \} \cdot  
\frac{1}{{\beta}^{-1}-\beta^{-m(\rho+1)-1}}.
\label{eq:def.b}
\end{equation}

Figure~\ref{fig:exp} shows a plot of the deficiency of the schedule
(or, more accurately, the RHS of~(\ref{eq:def.b}) for the interesting case
where $n>m$) as function of $m$ and $\rho$.
Note that for fixed $m$, the deficiency increases as a function of $n$, until a 
point at which it becomes relatively stable with $n$ (this can be explained by the factor
$\lambda$ that is the minimum of two functions, one of which does not depend on $n$). 
For large $m$, and even larger $n$, the deficiency is close to 2. From the plot, the
maximum value of deficiency is $3/8 \cdot 5^{5/4} \approx 2.803$. 
\begin{figure}[htb!]
\begin{minipage}[b]{1\linewidth}
\centering
    \includegraphics[scale=0.35]{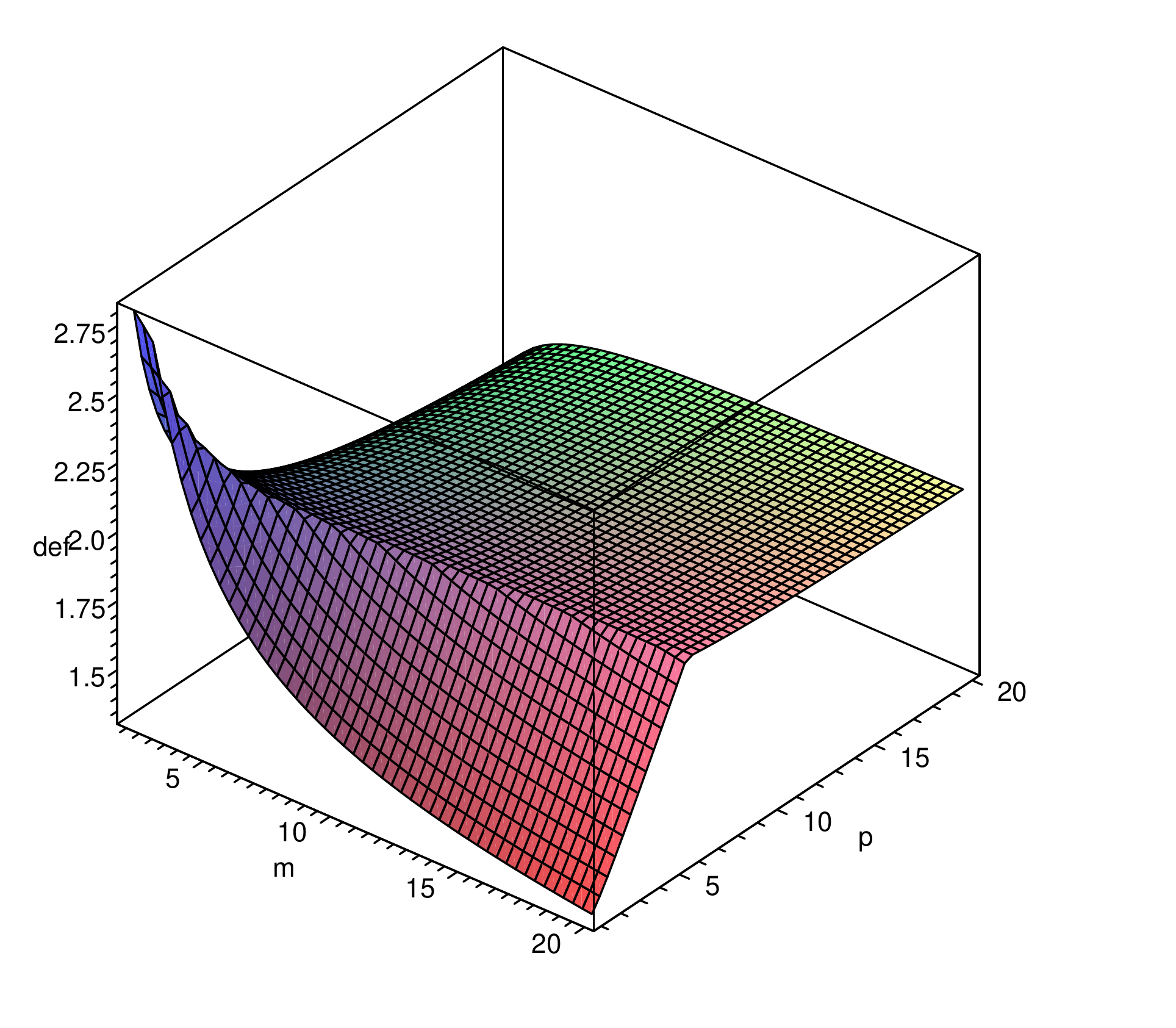}
\end{minipage}
\caption{\label{fig:exp} Plot of \defi(X) (as bounded by~\eqref{eq:def.b}) as a function of $m$ and $\rho$.}
\end{figure} 
Notwithstanding the numerical results, we can bound analytically the deficiency of the schedule $X$, as shown in the 
following proposition. 
\begin{proposition}
For the exponential schedule $X$ in which $b=\beta$, we have that $\textup{def(X)}\leq 3.74$, if $n>m$,
and $\textup{def(X)}\leq 4$, if $n \leq m$. 
\label{prop:def}
\end{proposition}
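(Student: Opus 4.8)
The plan is to bound the right-hand side of~\eqref{eq:def.b} separately in the two regimes, using the structural parameters already introduced, namely $\rho$ and $m$ with $n-1 = \rho m + \gamma$ and $0 \le \gamma \le m-1$. Recall that $\beta = (m(\rho+1)+1)^{1/(m(\rho+1))}$, so if I set $N := m(\rho+1)$ then $\beta = (N+1)^{1/N}$ and the two factors in~\eqref{eq:def.b} become, respectively, $\min\{2-1/m, \beta^m/(\beta^m-1)\}$ and $1/(\beta^{-1} - \beta^{-N-1})$. The key elementary fact I would establish first is that $\beta^{-1} - \beta^{-N-1} = \beta^{-1}(1 - \beta^{-N}) = \beta^{-1}(1 - 1/(N+1)) = \beta^{-1} \cdot N/(N+1)$, so the second factor equals $\beta (N+1)/N = (N+1)^{1/N}(N+1)/N = (N+1)^{(N+1)/N}/N$. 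This is exactly the familiar quantity $\tfrac{1}{N}(N+1)^{1+1/N}$, and I would note it is decreasing in $N$ for $N \ge 1$ (standard calculus, or: it equals $(1+1/N)(N+1)^{1/N}$, a product of two decreasing factors), hence maximized at the smallest admissible $N$.

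For the case $n \le m$: here $\rho = 0$, so $N = m$, and the second factor is $(m+1)^{(m+1)/m}/m$, maximized over $m \ge n \ge 1$ at $m=1$, giving value $4$; for the first factor, $\min\{2-1/m, \beta^m/(\beta^m-1)\} \le \beta^m/(\beta^m-1) = (m+1)/((m+1)-1) = (m+1)/m \le 2$, but this product overcounts. The cleaner route: I would just bound $\mathrm{def}(X) \le \beta^m/(\beta^m-1) \cdot (m+1)^{(m+1)/m}/m = \frac{m+1}{m}\cdot\frac{(m+1)^{(m+1)/m}}{m} = \frac{(m+1)^{(2m+1)/m}}{m^2}$, check this is decreasing in $m$ (again a product of decreasing factors once written as $(1+1/m)^2 (m+1)^{1/m}$), and evaluate at $m=1$ to get $4$. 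That handles $n \le m$.

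For the case $n > m$: now $\rho \ge 1$, so $N = m(\rho+1) \ge 2m \ge 2$, and the second factor $(N+1)^{(N+1)/N}/N$ is maximized at $N = 2$ (i.e.\ $m=1, \rho=1$), giving $3^{3/2}/2 = 3\sqrt{3}/2 \approx 2.598$. For the first factor I use $\min\{2-1/m,\ \beta^m/(\beta^m-1)\} \le 2 - 1/m \le 2$ when $m \ge 2$, and when $m = 1$ I use instead $\beta^m/(\beta^m-1) = \beta/(\beta-1) = (N+1)^{1/N}/((N+1)^{1/N}-1)$ which is largest at $N=2$. So I would split off $m=1$: for $m=1$, $\mathrm{def}(X) \le \frac{(N+1)^{1/N}}{(N+1)^{1/N}-1}\cdot \frac{(N+1)^{(N+1)/N}}{N}$ with $N \ge 2$; one checks this whole expression is decreasing in $N$ and at $N=2$ equals $\frac{\sqrt3}{\sqrt3-1}\cdot\frac{3\sqrt3}{2}$, which I would compute numerically and confirm is below $3.74$ (indeed $\sqrt3/(\sqrt3-1)\approx 1.366$, times $2.598$ gives $\approx 3.549$). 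For $m \ge 2$: $\mathrm{def}(X) \le 2 \cdot (N+1)^{(N+1)/N}/N$ with $N \ge 4$, which at $N=4$ gives $2\cdot 5^{5/4}/4 = 5^{5/4}/2 \approx 3.736 < 3.74$; since the second factor decreases in $N$, this is the maximum. Taking the max over both sub-cases yields $3.74$.

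The main obstacle is the bookkeeping around which of the two terms inside the $\min\{\cdot,\cdot\}$ is active, since the crossover depends on $m$ and $\beta$ (hence on $\rho$), and one must be careful not to lose too much by crudely bounding the $\min$ by either argument in the wrong regime — in particular the $m=1$ slice needs the $\beta/(\beta-1)$ branch rather than $2-1/m = 1$, and this is where the constant $3.74$ (rather than something smaller) comes from. Everything else reduces to verifying monotonicity of one-variable functions of the form $(1+1/N)^a (N+1)^{1/N}$, which are products of decreasing factors and so need no delicate argument; I would state these monotonicity claims and evaluate at the relevant integer endpoints $N\in\{2,4\}$ and $m\in\{1,2\}$.
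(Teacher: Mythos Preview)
Your rewriting of the second factor as $(N+1)^{(N+1)/N}/N$ with $N=m(\rho+1)$ is correct and matches what the paper does (with $y$ in place of $N$), and your treatment of the sub-case $n>m$, $m\ge 2$ is fine: bounding $2-1/m\le 2$ and using $N\ge 4$ gives $5^{5/4}/2\approx 3.736<3.74$.

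However, the proposal contains a genuine gap in both places where you invoke the $\beta^m/(\beta^m-1)$ branch of the $\min$. First, for $n>m$ and $m=1$ you write $\sqrt{3}/(\sqrt{3}-1)\approx 1.366$; in fact $\sqrt{3}/(\sqrt{3}-1)=(3+\sqrt{3})/2\approx 2.366$, so your product is $\approx 6.15$, well above $3.74$. The fix is not arithmetic but conceptual: for $m=1$ the other branch gives $2-1/m=1$, which is the smaller one (since $\beta>1$ forces $\beta/(\beta-1)>1$), so $\lambda\le 1$ and the bound is just $(N+1)^{(N+1)/N}/N\le 3\sqrt{3}/2\approx 2.60$. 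Your closing remark that ``the $m=1$ slice needs the $\beta/(\beta-1)$ branch rather than $2-1/m=1$'' is exactly backwards.

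Second, for $n\le m$ your expression $(m+1)^{(2m+1)/m}/m^2$ evaluates at $m=1$ to $2^3=8$, not $4$. Here the paper uses an observation you are missing: when $n\le m$ the greedy assignment puts each job on its own processor, so greedy attains the optimal makespan and hence $\lambda=1$ (the factor $\min\{2-1/m,\beta^m/(\beta^m-1)\}$ can be replaced by $1$ outright). With $\lambda=1$ and $N=m$ the bound is $(m+1)^{(m+1)/m}/m$, which is indeed maximized at $m=1$ and equals $4$. Without this observation the $n\le m$ case cannot be closed at $4$ using~\eqref{eq:def.b} alone.
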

\begin{proof}
Substituting $m(\rho+1)$ with $y$ in~(\ref{eq:def.b}) we obtain 
\begin{equation}
\defi(X) \leq \left(2-\frac{1}{m}\right) \frac{1}{(y+1)^{-\frac{1}{y}}-(y+1)^{-\frac{y+1}{y}}}.
\label{eq:def.y}
\end{equation}
Using standard calculus it is straightforward to show that the functions $(y+1)^{-1/y}$  and
$(y+1)^{-\frac{y+1}{y}}$ are increasing and decreasing functions of $y$, respectively. 
Therefore, the denominator of the RHS of~(\ref{eq:def.y}) is an increasing function of $y=m(\rho+1)$. 
It turns out that for $n>m$ this upper bound on $\defi(X)$ is maximized when $m=2$ and $\rho=1$, 
hence $\defi(X)\leq 3.74$. When $n\leq m$, the $(2-1/m)$ factor vanishes, since the greedy schedule
achieves optimal makespan (in other words, $\lambda=1$). In this case $\defi(X)\leq 4$.
\end{proof}

We conclude this section by observing that the exponential schedule $X$ we obtained 
outperforms the strategy of optimal acceleration ratio, in all ranges of $n$ and $m$. 
More precisely, the schedule of~\cite{BFZ} and~\cite{aaai06:contracts} has unbounded
deficiency for $m\gg n$. 
For $n>m$, this schedule has constant deficiency, 
but larger than the deficiency of $X$ (namely, a deficiency equal to 4.24 in the worst case).

\section{Lower bounds on schedule deficiency for a single processor}
\label{sec:lower}
In this section we show several lower bounds on the deficiency of schedules, assuming $m=1$. In Section~\ref{subsec:lower.general} we show
a general lower bound on the deficiency of a schedule for $n$ problems. In Section~\ref{subsec:lower.roundrobin}
we give a matching lower bound for the class of round-robin schedules. The result shows that the exponential schedule 
of Section~\ref{sec:exponential} is optimal for this class of schedules. Last, in Section~\ref{subsec:lower.two} we give 
an improved lower bound on the deficiency of a schedule for two problems on a single processor. The proof demonstrates, in addition,
some of the difficulties that one has to bypass in order to improve the lower bound on arbitrary schedules, and will hopefully provide
some intuition about how to handle the general case. 

We first note that, for $m=1$, we obtain from~\eqref{eq:expo.deficiency} that
an exponential schedule $X$ with base $b$ has deficiency 
\begin{equation}
\defi(X) \leq \frac{b^{n+1}}{b^n-1},
\label{eq:def.expo.more}
\end{equation}
which is minimized for a value of $b$ equal to $(n+1)^\frac{1}{n}$. 
\begin{corollary}
The deficiency of the best exponential schedule for $n$ problems is at most 
$\frac{(n+1)^{\frac{n+1}{n}}}{n}$.
\label{cor:def.best.expo}
\end{corollary}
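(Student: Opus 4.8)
The plan is to simply instantiate the bound~\eqref{eq:def.expo.more} at its minimizing base, so the whole argument reduces to an elementary one-variable optimization followed by a substitution. First I would verify the claim, stated just above the corollary, that $f(b) = b^{n+1}/(b^n-1)$ attains its minimum over $b>1$ at $b = (n+1)^{1/n}$. Writing $\ln f(b) = (n+1)\ln b - \ln(b^n-1)$ and differentiating gives
\[
\frac{f'(b)}{f(b)} = \frac{n+1}{b} - \frac{n\,b^{n-1}}{b^n-1},
\]
which vanishes exactly when $(n+1)(b^n-1) = n\,b^n$, i.e.\ when $b^n = n+1$. Since $f(b)\to\infty$ as $b\to 1^+$ and as $b\to\infty$, this unique critical point on $(1,\infty)$ is the global minimum.

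Then I would substitute $b = (n+1)^{1/n}$, so that $b^n = n+1$ and hence $b^{n+1} = b\cdot b^n = (n+1)^{1/n}(n+1) = (n+1)^{(n+1)/n}$, while the denominator becomes $b^n - 1 = n$. Plugging into~\eqref{eq:def.expo.more} then yields $\defi(X) \le (n+1)^{(n+1)/n}/n$, which is precisely the claimed bound.

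I do not expect any real obstacle here: the corollary is an immediate consequence of~\eqref{eq:def.expo.more} combined with the routine minimization of $f$, both of which are already established in the excerpt. The only point deserving a line of justification is confirming that the proposed value of $b$ is the genuine minimizer and not an artifact of the boundary, which the derivative computation above settles.
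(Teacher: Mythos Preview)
Your proposal is correct and follows exactly the paper's approach: the paper simply states that~\eqref{eq:def.expo.more} is minimized at $b=(n+1)^{1/n}$ and records the resulting value as the corollary, without even writing out the derivative or the substitution. Your write-up just fills in those routine details, so there is nothing to add or compare.
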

In what follows, we will denote by \expo \ the best exponential schedule whose deficiency is
described by Corollary~\ref{cor:def.best.expo}.

Recall that we denote by $l_{p,t}$ the length of the longest 
contract for problem $p$ that has completed by time $t$. We observe that for all schedules $X$ on a single processor,
\begin{equation}
\defi(X)=\sup_t \frac{t}{\sum_{i=0}^{n-1} l_{i,t}}=\sup_{t \in T_X} \frac{t}{\sum_{i=0}^{n-1} l_{i,t}},
\label{eq:def}
\end{equation}
where $T_X$ is defined as the set of all times right before the completion of a contract in $X$. 

\subsection{A general lower bound}
\label{subsec:lower.general}

\begin{theorem}
For any schedule $X$ that involves $n$ problems and a single processor, we have that
\[
\textup{def}(X) \geq\frac{n+1}{n}.
\]
\label{thm:lower.general}
\end{theorem}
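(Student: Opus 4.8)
The plan is to argue by contradiction, using the single‑processor characterization $\textup{def}(X)=\sup_t \frac{t}{\sum_{i=0}^{n-1}l_{i,t}}$. List the contracts of $X$ in order of completion, with lengths $x_0,x_1,\ldots$, let $G_K=\sum_{j=0}^{K}x_j$ be the completion time of the $K$‑th contract, and let $\ell_i^{(K)}$ be the length of the longest contract for problem $i$ among $x_0,\ldots,x_K$, so that $\sum_i l_{i,G_K^-}=U_{K-1}$ where $U_{K-1}:=\sum_i \ell_i^{(K-1)}$. Set $\Delta_{K-1}:=G_{K-1}-U_{K-1}\ge 0$, the total length of already‑completed contracts that are ``shadowed'' by a longer completed contract for the same problem. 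Since $\sum_i l_{i,t}$ is constant on $(G_{K-1},G_K)$, for every index $K$ for which $G_K^-$ is a legal interruption time (all $n$ problems covered by $x_0,\ldots,x_{K-1}$) we get
\[
\textup{def}(X)\ \ge\ \frac{G_K}{U_{K-1}}\ =\ \frac{U_{K-1}+\Delta_{K-1}+x_K}{U_{K-1}}\ =\ 1+\frac{\Delta_{K-1}+x_K}{U_{K-1}}.
\]
Hence, assuming $\textup{def}(X)<\frac{n+1}{n}$, one obtains $\Delta_{K-1}+x_K<\frac1n U_{K-1}$ for every such $K$; this is the inequality I will contradict. (If $\textup{def}(X)=\infty$, e.g.\ if some problem gets only finitely many contracts, there is nothing to prove, so from now on every problem receives arbitrarily long contracts at arbitrarily large indices.)

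Next I would choose the interruption times that make this inequality bite. Fix a large index $N$; for each problem $i$ let $K_i$ be the index at which the longest contract of $i$ among $x_0,\ldots,x_{N-1}$ is scheduled, so $x_{K_i}=\ell_i^{(N-1)}$, and relabel the problems as $i_1,\ldots,i_n$ so that $K_{i_1}<\cdots<K_{i_n}$ (these indices are distinct, since one contract sits at each index). For $N$ large enough, every $K_{i_k}$ exceeds the index by which all problems are first covered, so each $G_{K_{i_k}}^-$ is legal. Apply the inequality $\Delta_{K-1}+x_K<\frac1n U_{K-1}$ at $K=K_{i_k}$, using: (a) the contract running at $G_{K_{i_k}}^-$ has length exactly $x_{K_{i_k}}=\ell_{i_k}^{(N-1)}$; (b) $U_{K_{i_k}-1}\le U_{N-1}$ by monotonicity; and (c) by time $G_{K_{i_k}}^-$ each of the problems $i_1,\ldots,i_{k-1}$ has already completed every contract up to and including its longest one among the first $N$, so the lengths of its earlier ``record'' contracts are all counted inside $\Delta_{K_{i_k}-1}$, i.e.\ $\Delta_{K_{i_k}-1}\ge\sum_{l<k}R_{i_l}$, where $R_i\ge 0$ is the total length of the record contracts of problem $i$ that are strictly shorter than $\ell_i^{(N-1)}$ (independent of $k$). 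These three facts turn the inequality into
\[
\ell_{i_k}^{(N-1)}+\sum_{l<k}R_{i_l}\ <\ \frac1n\,U_{N-1}\qquad (k=1,\ldots,n).
\]

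Summing these $n$ strict inequalities and using $\sum_{k=1}^{n}\ell_{i_k}^{(N-1)}=U_{N-1}$ gives
\[
U_{N-1}+\sum_{k=1}^{n}\sum_{l<k}R_{i_l}\ <\ U_{N-1},
\]
which is impossible since every $R_{i_l}\ge 0$. This contradiction establishes $\textup{def}(X)\ge\frac{n+1}{n}$.

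The delicate part — and the conceptual heart — is step (c): one must verify that the ``retired record'' contracts of an already‑finalized problem are genuine distinct completed contracts, each strictly dominated by that problem's current longest contract, hence truly contained in $\Delta$; and one must check the ``$N$ large enough'' quantifier so that all the chosen $G_{K_{i_k}}^-$ are legal (this is where finiteness of $\textup{def}(X)$, forcing every problem to be revisited with ever‑longer contracts, is used). The inequality $\Delta_{K-1}+x_K<\frac1n U_{K-1}$ encodes precisely that on one processor the offline optimum must itself solve all $n$ problems and so cannot be beaten too cheaply; summing it over one full pass through the problems shows that the combined ``doubling overhead'' of revisiting all $n$ of them is unavoidable, which is where the factor $1+\frac1n$ comes from.
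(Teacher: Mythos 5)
Your reduction to the inequality $\Delta_{K-1}+x_K<\tfrac{1}{n}U_{K-1}$ and the idea of summing it over one index per problem are sound, and indeed (a), (b), (c) and the final summation all check out (the $R$-terms are not even needed: $x_{K_{i_k}}<\tfrac1n U_{N-1}$ summed over $k$ already gives $U_{N-1}<U_{N-1}$). The genuine gap is in the quantifier step you yourself flag as delicate: you need every chosen interruption point $G_{K_{i_k}}^-$ to be legal, i.e.\ each problem's longest contract among $x_0,\ldots,x_{N-1}$ must sit at an index beyond the point where all $n$ problems are first covered, and you justify this by the claim that finiteness of $\textup{def}(X)$ ``forces every problem to be revisited with ever-longer contracts.'' That claim is false. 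Take $n=2$, $m=1$, schedule one contract of length $1$ for problem $0$ on $[0,1]$ and then contracts of lengths $2,4,8,\ldots$ for problem $1$: at the legal interruption just before the contract $2^{k+1}$ finishes the ratio is $(2^{k+2}-1)/(2^k+1)$, so $\textup{def}(X)=4<\infty$, yet problem $0$ receives a single contract and its record index never moves. More generally, a problem may receive infinitely many contracts whose record is attained once and never broken, in which case $K_{i_1}(N)$ is a fixed small index for all $N$, possibly before coverage, and your first inequality cannot be applied there; the whole $n$-fold summation then collapses.

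What you actually need is the weaker statement that \emph{under the contradiction hypothesis} $\textup{def}(X)<\tfrac{n+1}{n}$ every problem's record is broken at arbitrarily large indices (beyond the coverage index). This is true, but it does not come for free: proving it requires an additional argument of essentially the same substance as the theorem itself --- e.g.\ if some problem's records stay bounded by $L$, then for large legal $t$ one of the other $n-1$ problems has a completed contract of length at least $(\tfrac{n}{n+1}t-L)/(n-1)$, and inspecting the instant just before that contract completes already yields $\textup{def}(X)\geq\tfrac{n+1}{n}$ in the limit. This ``look just before a long contract finishes'' step is exactly the paper's (much shorter) proof: from $\textup{def}(X)<\tfrac{n+1}{n}$ at a single large $t$, some problem has a completed contract of length at least $\tfrac{t}{n+1}$ finishing at some $q\leq t$, and since $\sum_i l_{i,q^-}\leq q-\tfrac{t}{n+1}$ one gets $\textup{def}(X)\geq \frac{t}{t-t/(n+1)}=\tfrac{n+1}{n}$, a contradiction, with no need to track record indices or coverage for all $n$ problems simultaneously. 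So your scheme can be repaired, but as written it rests on a false lemma at a load-bearing point, and the repair essentially reproduces the direct argument.
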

\begin{proof}
Suppose, by way of contradiction, that $\defi(X)<\frac{n+1}{n}$. 
From~\eqref{eq:def}, for any given time $t$ we have 
\[
\frac{n+1}{n}>\defi(X) \geq \frac{t}{\sum_{i=0}^{n-1} l_{i,t}},
\]
therefore $\sum_{i=0}^{n-1} l_{i,t}> t\frac{n}{n+1}$. This implies there exists a problem for which its largest contract completed by time $t$ has length
at least $\frac{t}{n+1}$. Let $q\leq t$ denote the completion time of this contract. From the definition of the deficiency, and by the observation that $\sum_{i=0}^{n-1} l_{i,q^-} \leq q-\frac{t}{n+1}$, we obtain that
\[
\defi(X) \geq \frac{q}{q-\frac{t}{n+1}} \geq \frac{t}{t-\frac{t}{n+1}},
\]  
where the last inequality follows from the monotonicity of the function $f(x)=x/(x-a)$, with $a>0$. Therefore, $\defi(X) \geq\frac{n+1}{n}$, 
a contradiction.
 \end{proof}

Figure~\ref{fig:exponential-lowerbound} illustrates the performance of the best exponential schedule vs the lower-bound value
of Theorem~\ref{thm:lower.general}, for $n=1\ldots20$.
\begin{figure}[htb!]
\centering
\begin{tikzpicture}
    \begin{axis}[
        axis lines=center, xmin=1, xmax=20,
        ymax = 4,
        ylabel=\defi,
        xlabel=$n$, scale=0.7
        ]
        \addplot [domain=1:20,samples=250, thick, blue] {(x+1)/x};
        \addplot [domain=1:20,samples=250, thick, red ] {(x+1)^((x+1)/x)/x};
    \end{axis}
    \end{tikzpicture}
\caption{Plot of the deficiency of the best exponential schedule (in red) vs the the lower-bound value
of Theorem~\ref{thm:lower.general} (in blue).}
\label{fig:exponential-lowerbound}
\end{figure}
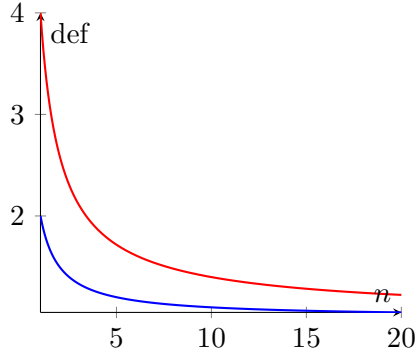

\subsection{Round-robin schedules on a single processor}
\label{subsec:lower.roundrobin}

We now consider the class of all round-robin schedules on a single processor. Namely, a schedule $X$ in this class schedules contracts of  
lengths $x_0,x_1, \ldots$, in this order, such that the contract $x_i$ is assigned to problem $i \bmod n$. We can assume, without
loss of generality, than for all $i \geq n$, $x_i<x_{i+n}$, otherwise the contract $x_{i+n}$ can be omitted from the schedule without 
increasing the deficiency of the schedule. 
We will show that the exponential schedule \expo \ is optimal for this class of
schedules. To this end, we will apply a theorem that allows us to relate the performance of schedules that have certain
structure (such as round-robin schedules) to the performance of exponential schedules.

More precisely, we will make use of the results by Gal \cite{gal80:search-games} and
Schuierer~\cite{schuierer:lb} which we state here in a simplified form. Given an infinite sequence 
$X=(x_0, x_1, \ldots)$, define $X^{+i}=(x_i,x_{i+1}, \ldots)$ as the suffix of the sequence $X$ starting at $x_i$. 
Define also $G_a = (1,a,a^2,\ldots)$ to be the geometric sequence in $a$.
\begin{theorem}[\cite{gal80:search-games,schuierer:lb}]\label{thm:limit}
  Let $X = (x_0,x_1,\ldots)$ be a sequence of positive numbers, $r$
  an integer, and $a = \limsup_{n\rightarrow\infty} (x_n)^{1/n}$, for
$a\in \mathbb{R} \cup
\{+\infty\}$.  
Let $F_k$, $k \geq 0$ be a sequence of functionals which satisfy the following properties:
  \begin{enumerate}
  \item $F_k(X)$ only depends on $x_0,x_1, \ldots ,x_{k+r}$,
    \item $F_k(X)$ is continuous for all $x_i>0$, with $0 \leq i
      \leq k + r$, 
    \item $F_k(\alpha X) = F_k(X)$, for all $\alpha > 0$, 
    \item $F_k(X+Y) \leq \max(F_k(X),F_k(Y))$, and
    \item $F_{k+1}(X) \geq F_k(X^{i+1})$, for all $i \geq
      1$,  
  \end{enumerate}
  then
  \[
     \sup_{0 \leq k < \infty} F_k(X)
     \geq  \sup_{0 \leq k < \infty} F_k(G_a).
  \]
\end{theorem}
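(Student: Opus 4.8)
The plan is to argue by contradiction: assuming $\sup_k F_k(X)<\sup_k F_k(G_a)$, I would isolate a single index $k_0$ and a single base $b<a$ for which it suffices to derive $\sup_k F_k(X)\ge F_{k_0}(G_b)$, and then obtain this inequality by exhibiting a finite window of the geometric sequence $G_b$ as a nonnegative combination of shifted copies of $X$, so that properties (1)--(5) force $F_{k_0}(G_b)$ down to $\sup_k F_k(X)$.

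First come the easy reductions. If $\sup_k F_k(X)=+\infty$ there is nothing to prove, so put $v:=\sup_k F_k(X)<\infty$. One checks from properties (3) and (5) together with the self-similarity $G_b^{+j}=b^{j}G_b$ that $F_k(G_b)$ is non-decreasing in $k$ for every base $b$; in any case $\sup_k F_k(G_a)>v$ yields an index $k_0$ with $F_{k_0}(G_a)>v$. Since $F_{k_0}$ depends only on $x_0,\dots,x_{k_0+r}$ (property (1)) and is continuous in those coordinates (property (2)), and since $G_b\to G_a$ coordinatewise as $b\uparrow a$, I can fix a base $b<a$ with $F_{k_0}(G_b)>v$. (If $a=+\infty$ one instead fixes an arbitrarily large $b$ and argues, for each such $b$, that $\sup_k F_k(X)\ge\sup_k F_k(G_b)$.) It therefore suffices to prove $F_{k_0}(G_b)\le v$.

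The technical core --- and the step I expect to be the main obstacle --- is the combinatorial claim that there exist an index set $N\subseteq\{2,3,\dots\}$, nonnegative reals $(c_n)_{n\in N}$, and a scalar $\mu>0$ such that $\sum_{n\in N}c_n\,x_{n+j}=\mu\,b^{j}$ for all $j=0,1,\dots,k_0+r$, with the series converging in each of these coordinates. The definition $a=\limsup_n x_n^{1/n}$ supplies exactly the two ingredients needed: because $b<a$, infinitely many $n$ satisfy $x_n\ge b^n$ (these indices form the backbone of $N$), and for any $a'>a$ one has $x_n\le(a')^n$ for all large $n$, which lets the coefficients $c_n$ be chosen to decay fast enough for convergence. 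Turning these raw facts into coefficients that match the geometric window $(1,b,\dots,b^{k_0+r})$ is the content of Gal's averaging lemma and Schuierer's refinement; it is elementary but genuinely uses both directions of the $\limsup$ and, in general, infinitely many shifts together with a limiting argument, rather than a bounded number of windows.

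Granting this claim, the rest is bookkeeping with the axioms. Set $W:=\sum_{n\in N}c_n X^{+n}$; by construction its first $k_0+r+1$ coordinates equal $\mu(1,b,\dots,b^{k_0+r})$, so by property (1) and scale invariance (property (3)) we get $F_{k_0}(W)=F_{k_0}(\mu G_b)=F_{k_0}(G_b)$. On the other hand, applying property (4) repeatedly to a partial sum $W_M$ and property (3) to strip each coefficient $c_n$ gives $F_{k_0}(W_M)\le\sup_{n\in N}F_{k_0}(X^{+n})$; letting $M\to\infty$ and using that $F_{k_0}$ is continuous and depends on finitely many coordinates yields $F_{k_0}(W)\le\sup_{n\in N}F_{k_0}(X^{+n})$. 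Finally property (5), which for every $n\ge 2$ gives $F_{k_0}(X^{+n})\le F_{k_0+1}(X)$, bounds the last quantity by $F_{k_0+1}(X)\le v$. Combining, $F_{k_0}(G_b)\le v$, contradicting the choice of $b$; hence $\sup_k F_k(X)\ge\sup_k F_k(G_a)$.
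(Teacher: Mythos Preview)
The paper does not prove this theorem. Theorem~\ref{thm:limit} is quoted verbatim from Gal~\cite{gal80:search-games} and Schuierer~\cite{schuierer:lb} and used as a black-box tool in the proofs of Theorems~\ref{thm:roundrobin}, \ref{thm:steins}, and~\ref{thm:exp.optimal.n.equals.2}; there is no ``paper's own proof'' to compare your attempt against.

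That said, your sketch is faithful to the standard argument in the cited sources. The overall architecture---reduce to a fixed index $k_0$ and a base $b<a$ by continuity and finite dependence, then realise a finite geometric window $(1,b,\dots,b^{k_0+r})$ as a nonnegative superposition of shifts $X^{+n}$, and finally push $F_{k_0}$ through that superposition using properties (3), (4), (5)---is exactly Gal's method. You correctly identify the averaging/representation step as the nontrivial core, and you correctly note that it genuinely requires both the upper and lower information contained in the $\limsup$ (infinitely many $n$ with $x_n\ge b^n$, and an eventual polynomial envelope $x_n\le (a')^n$). Your closing bookkeeping is also sound: property~(4) plus a limit handles the infinite sum because $F_{k_0}$ depends on only finitely many coordinates, and property~(5) as stated in the paper (with $i\ge 1$, hence shifts by $n\ge 2$) is precisely why you restricted $N\subseteq\{2,3,\dots\}$.

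If you want to tighten the write-up, the one place that deserves more care is the passage to the limit in $F_{k_0}(W_M)\to F_{k_0}(W)$: you should state explicitly that the partial sums converge \emph{coordinatewise} on the first $k_0+r+1$ coordinates (this is where the decay of the $c_n$ enters), so that continuity of $F_{k_0}$ on $\mathbb{R}_{>0}^{k_0+r+1}$ applies. Otherwise the proposal is a correct outline of the Gal--Schuierer proof.
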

At an intuitive level, Theorem~\ref{thm:limit} allows us to lower-bound the supremum of an infinite set of functionals by the supremum of
simple functionals of a geometric sequence; the latter is typically much easier to compute than the former. In our context, the objective
is then to express the deficiency as the supremum of a sequence of functionals. 
This technique has been applied in search games~\cite{gal03:rendezvous}, but also in previous work on 
scheduling of contract algorithms~\cite{aaai06:contracts,soft-contracts}.

\begin{theorem}
For every round robin schedule $X$ on a single processor, we have that $\textup{def(X)} \geq \frac{(n+1)^{\frac{n+1}{n}}}{n}$.
\label{thm:roundrobin}
\end{theorem}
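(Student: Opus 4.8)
The plan is to cast $\defi(X)$, for a round-robin schedule $X$ on a single processor, as the supremum of a sequence of functionals that meet the hypotheses of Theorem~\ref{thm:limit}, and then to invoke that theorem to reduce the lower bound to the geometric case, where it is exactly Corollary~\ref{cor:def.best.expo}. First I would pin down the functional. Fix a round-robin schedule with contract lengths $x_0,x_1,\ldots$, where by assumption $x_i<x_{i+n}$ for every $i$. Consider an interruption infinitesimally before contract $x_{n+k}$ completes ($k\geq 0$; for $k=0$ this is the earliest time at which each problem has a completed contract, as required by the canonical assumption). At that instant the completed contracts are $x_0,\ldots,x_{n+k-1}$, so the elapsed time tends to $\sum_{i=0}^{n+k}x_i$; moreover the last $n$ of them, namely $x_k,\ldots,x_{n+k-1}$, are assigned to the $n$ distinct problems, and since within each residue class modulo $n$ the lengths increase, $x_j$ is the longest completed contract for problem $j\bmod n$. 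Hence $\sum_{i=0}^{n-1}l_{i,t}=\sum_{j=k}^{n+k-1}x_j$, and by~\eqref{eq:def} we get
\[
\defi(X)=\sup_{k\geq 0}F_k(X),\qquad F_k(X):=\frac{\sum_{i=0}^{n+k}x_i}{\sum_{j=k}^{n+k-1}x_j}.
\]

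Next I would check that the $F_k$ satisfy the five conditions of Theorem~\ref{thm:limit} with $r=n$: (1) $F_k(X)$ depends only on $x_0,\ldots,x_{n+k}$; (2) it is continuous wherever the relevant entries are positive; (3) it is scale-invariant, being a ratio of two homogeneous linear forms; (4) $F_k(X+Y)\leq\max(F_k(X),F_k(Y))$, by the mediant inequality applied to the two nonnegative linear forms in the numerator and denominator; and (5) $F_{k+1}(X)\geq F_k(X^{+1})$, which holds because these two quantities share the denominator $\sum_{j=k+1}^{n+k}x_j$ while the numerator of $F_{k+1}(X)$ exceeds that of $F_k(X^{+1})$ by the positive term $x_0$. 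Applying Theorem~\ref{thm:limit} with $a=\limsup_{l\to\infty}x_l^{1/l}\in\mathbb{R}\cup\{+\infty\}$ then yields $\defi(X)=\sup_k F_k(X)\geq\sup_k F_k(G_a)$.

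It remains to bound $\sup_k F_k(G_a)$ from below. For $a>1$ one computes $F_k(G_a)=\dfrac{a^{n+1}-a^{-k}}{a^n-1}$, which is increasing in $k$, so $\sup_k F_k(G_a)=\dfrac{a^{n+1}}{a^n-1}$; this is exactly the deficiency of the exponential schedule with base $a$ given by~\eqref{eq:def.expo.more}, and since $b\mapsto b^{n+1}/(b^n-1)$ is minimized over $b>1$ at $b=(n+1)^{1/n}$ with value $(n+1)^{(n+1)/n}/n$ (Corollary~\ref{cor:def.best.expo}), we obtain $\sup_k F_k(G_a)\geq (n+1)^{(n+1)/n}/n$. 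For $a\leq 1$ or $a=+\infty$ a direct computation gives $\sup_k F_k(G_a)=+\infty$, so the inequality is trivial. In all cases $\defi(X)\geq (n+1)^{(n+1)/n}/n$, and since this value equals the deficiency of \expo, the schedule \expo \ is optimal within the class of round-robin schedules on a single processor.

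The reduction to the geometric case and the single-variable minimization (already carried out after~\eqref{eq:def.expo.more}) are routine, as is the verification of conditions (1)--(5). I expect the only delicate point to be the first step: correctly arguing, from the round-robin structure together with the reduction $x_i<x_{i+n}$, that the per-problem maximum at an interruption just before $x_{n+k}$ completes is realized by one of the last $n$ completed contracts $x_k,\ldots,x_{n+k-1}$, and getting the exact index ranges (hence the off-by-one in the numerator of $F_k$) right.
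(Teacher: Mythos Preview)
Your proposal is correct and follows essentially the same route as the paper: you define the same functional $F_k(X)=\dfrac{\sum_{i=0}^{n+k}x_i}{\sum_{j=k}^{n+k-1}x_j}$, invoke Theorem~\ref{thm:limit} to pass to the geometric sequence $G_a$, and then minimize $a^{n+1}/(a^n-1)$ over $a>1$. Your write-up is in fact slightly more careful than the paper's, explicitly verifying conditions (1)--(5) (in particular the mediant argument for (4) and the shift comparison for (5)) and treating the boundary cases $a\leq 1$ and $a=+\infty$, but the approach is the same.
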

\begin{proof}
Consider a time $t$ right before the completion of contract $x_{k+n}$ of the round-robin schedule, with $k\geq 0$. The set $S_X^t$ of the $n$ largest 
contracts per problem is then the set $\{x_k,x_{k+1},\ldots x_{k+n-1}\}$. Thus,
\[
\defi(X) \geq \frac{t}{\sum_{j=0}^{n-1} x_{k+j}}= \frac{\sum_{j=0}^{k+n}x_j}{\sum_{j=k}^{k+n-1} x_{j}}.
\]
Define now the functional $F_k(X)=\frac{\sum_{j=0}^{k+n}x_j}{\sum_{j=k}^{k+n-1} x_{j}}$. It is easy to verify that 
$F_k(X)$ satisfies the conditions of Theorem~\ref{thm:limit}; one can also appeal to Example 7.3 in~\cite{gal03:rendezvous}. Therefore, we conclude that there is $a>0$ such that 
\begin{equation}
\defi(X)=\sup_{0 \leq k < \infty} F_k(X)\geq \sup_{0 \leq k < \infty} \frac{\sum_{j=0}^{k+n}a^j}{\sum_{j=k}^{k+n-1} a^{j}}.
\label{eq:lower.round.1}
\end{equation}
Note that if $a=1$, we have that $\frac{\sum_{j=0}^{k+n}a^j}{\sum_{j=k}^{k+n-1} a^{j}}=\frac{k+n+1}{n}$, which tends to infinity as 
$k \rightarrow \infty$, and thus $\defi(X)=\infty$ in this case. Thus we can assume that $a\neq1$, thus~\eqref{eq:lower.round.1} implies
\begin{equation}
\defi(X)\geq \sup_{0 \leq k < \infty} \frac{a^{k+n+1}-1}{a^{k+n}-a^k}=\sup_{0 \leq k < \infty} \frac{a^{n+1}-\frac{1}{a^k}}{a^n-1}.
\label{eq:lower.round.2}
\end{equation}
Note that if $a<1$, then the RfHS of~\eqref{eq:lower.round.2} is $\infty$. Thus, we can assume that $a<1$. In this case, 
\begin{equation}
\sup_{0 \leq k < \infty} \frac{a^{n+1}-\frac{1}{a^k}}{a^n-1}=\frac{a^{n+1}}{a^n-1}.
\label{eq:lower.round.3}
\end{equation}
Last, the expression $\frac{a^{n+1}}{a^n-1}$ attains its minimum at $a=(n+1)^\frac{1}{n}$. For this value, and combining~\eqref{eq:lower.round.2}
and~\eqref{eq:lower.round.3} we obtain that 
\[
\defi(X) \geq \frac{(n+1)^{\frac{n+1}{n}}}{n},
\]
which concludes the proof.
\end{proof}

From Corollary~\ref{cor:def.best.expo}, it follows that \expo \ is optimal for the class of round-robin schedules. 

In order to illustrate the further applicability of Theorem~\ref{thm:limit}, we will show how it can lead to a simple alternative proof
of the main result in~\cite{BFZ}. Namely, we will show that there is an exponential schedule for scheduling contracts for $n$ problems in
$m$ parallel processors that  achieves optimal {\em acceleration ratio} for the class of {\em cyclic} schedules.
This class of schedules was introduced in~\cite{BFZ} and consists of round-robin schedules (i.e., schedules that satisfy the properties of
Definition~\ref{def:cyclic}), with the additional {\em length-increasing} property: namely, suppose that $i,j$ are such that the contracts $x_i, x_j$ are assigned to the same problem $p$. 
Then, if $i<j$, we require that $x_i<x_j$.

Benstein {\em et al.}~\cite{BFZ} gave an exponential cyclic schedule with acceleration ratio $\frac{n}{m}(\frac{n+m}{n})^\frac{n+m}{m}$.
Their main result showed that this schedule is optimal for the class of cyclic schedules:
\begin{theorem}[\cite{BFZ}]
For scheduling contracts for $n$ problems in $m$ processors, every cyclic schedule $X$ has acceleration ratio $\alpha(X)$ such that 
\[
\alpha(X) \geq \frac{n}{m}(\frac{n+m}{n})^\frac{n+m}{m}.
\]
\label{thm:steins}
\end{theorem}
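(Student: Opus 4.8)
The plan is to mirror the structure of the single-processor lower bound (Theorem~\ref{thm:roundrobin}), but now using the makespan-based characterization of the performance measure together with Theorem~\ref{thm:limit}. Recall that for cyclic schedules the relevant quantity is the acceleration ratio, and by the analysis in Section~\ref{sec:exponential} (in particular the formula for $G_{n+k}$ and the greedy-makespan computation of Lemma~\ref{lemma:graham.application}), for an exponential cyclic schedule the worst case occurs at interruptions just before a contract completes. So first I would fix an arbitrary cyclic schedule $X$ with contract lengths $x_0,x_1,\ldots$ and consider an interruption time $t$ infinitesimally before the completion of contract $x_{k+n}$, for $k\ge 0$. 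At that moment, by the problem round-robin and length-increasing properties, the best completed contract for each problem is $x_{k}, x_{k+1}, \ldots, x_{k+n-1}$, and the acceleration ratio is determined by the problem whose best contract is \emph{shortest}, i.e.\ $x_k$ (the smallest of these $n$). Thus $\alpha(X) \ge \sup_{k\ge 0} \frac{t}{x_k}$ where $t = G_{k+n}^-$ is the finish time of $x_{k+n}$, which because of the processor round-robin property equals $\sum_{i : i \equiv (k+n) \bmod m,\ i \le k+n} x_i$ — a sum of roughly $(k+n)/m$ of the contract lengths spaced $m$ apart.

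The second step is to package this into a functional $F_k(X)$ of the form $F_k(X) = \bigl(\sum_{i\equiv (k+n)\bmod m,\ i\le k+n} x_i\bigr) / x_k$, and to verify it satisfies the five conditions of Theorem~\ref{thm:limit}: dependence only on $x_0,\ldots,x_{k+n}$ (condition~1 with $r=n$), continuity (condition~2), scale-invariance (condition~3), the subadditivity-type inequality $F_k(X+Y)\le\max(F_k(X),F_k(Y))$ which holds for any ratio of sums of coordinates (condition~4), and the shift monotonicity $F_{k+1}(X)\ge F_k(X^{+i+1})$ (condition~5). These are routine and essentially identical to the checks done for the single-processor functional; one can again invoke the general machinery of~\cite{gal03:rendezvous}. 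Theorem~\ref{thm:limit} then gives $\alpha(X)\ge \sup_k F_k(G_a)$ for some $a = \limsup (x_n)^{1/n} > 0$.

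The third step is the explicit computation on the geometric sequence $G_a=(1,a,a^2,\ldots)$. Here $x_i = a^i$, so $F_k(G_a) = \frac{\sum_{j=0}^{\lfloor (k+n)/m\rfloor} a^{k+n-jm}}{a^k}$ (summing the contracts on the processor handling $x_{k+n}$, which are $a^{k+n}, a^{k+n-m}, a^{k+n-2m},\ldots$ down to the smallest index in that residue class). Summing the finite geometric series and simplifying, one gets $F_k(G_a)= a^{n}\cdot \frac{a^{m(\lfloor (k+n)/m\rfloor+1)} - a^{(k+n)\bmod m}}{a^m-1}\cdot a^{-k}$, and taking the supremum over $k$ (handling $a=1$ and $a<1$ as degenerate infinite cases exactly as in Theorem~\ref{thm:roundrobin}) yields a clean closed form whose dependence on $k$ is eliminated in the $\sup$, matching the expression $\frac{a^{n+m}}{a^{n+m-1}-a^{\gamma}}$-type quantity from~\eqref{eq:expo.deficiency} with $m=1$ replaced by general $m$. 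Finally, minimizing this over $a>1$ — by setting the derivative to zero, which gives $a$ a root of a binomial-type equation — one finds the optimal base and verifies that the resulting bound equals exactly $\frac{n}{m}\bigl(\frac{n+m}{n}\bigr)^{\frac{n+m}{m}}$, which is the value achieved by the exponential cyclic schedule of~\cite{BFZ}.

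The main obstacle I anticipate is the bookkeeping in step three: correctly identifying \emph{which} contracts land on the processor that finishes $x_{k+n}$ under the processor round-robin rule, getting the index ranges and the residue $(k+n)\bmod m$ right in the geometric sum, and then carrying out the minimization over $a$ so that the algebra collapses to precisely $\frac{n}{m}(\frac{n+m}{n})^{(n+m)/m}$ rather than something off by a multiplicative factor. A secondary subtlety is making sure the length-increasing (cyclic) property is genuinely used — it is what guarantees that at interruption just before $x_{k+n}$ the worst problem is the one served by $x_k$, and without it the adversarial interruption point could differ — so I would be careful to state where that hypothesis enters and to confirm the functional framework is still applicable (the suffix shift in condition~5 preserves the cyclic structure up to relabeling).
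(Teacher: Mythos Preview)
Your overall strategy---define a functional from a single interruption point and invoke Theorem~\ref{thm:limit}---is viable and does lead to the correct bound, but your justification of the key inequality $\alpha(X)\ge G_{k+n}/x_k$ contains a genuine gap, and your route differs from the paper's in a way worth noting.

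\medskip
\noindent\textbf{The gap.} You assert that at time $G_{k+n}^-$ the best completed contracts for the $n$ problems are exactly $x_k,\ldots,x_{k+n-1}$, with $x_k$ the smallest. Both parts fail for $m>1$. First, contracts on different processors need not complete in index order, so some $x_{k+i}$ with $1\le i\le n-1$ may still be running at $G_{k+n}^-$, while contracts of higher index on other processors may already be done. Second, the length-increasing property only compares contracts for the \emph{same} problem; it says nothing about $x_k$ versus $x_{k+1},\ldots,x_{k+n-1}$. What you actually need, and what \emph{is} true, is the weaker fact that for the single problem $q=k\bmod n$, no contract $x_{k+jn}$ with $j\ge 1$ has completed by time $G_{k+n}^-$. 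This follows by pairing: for $j\ge 2$, every summand $x_{k+n-im}$ in $G_{k+n}$ is dominated by the summand $x_{k+jn-im}$ in $G_{k+jn}$ (same problem $(k-im)\bmod n$, larger index, hence strictly larger by the length-increasing property), so $G_{k+jn}>G_{k+n}$. With this fix you get $l_{q,G_{k+n}^-}\le x_k$, your functional is a valid lower bound, and your evaluation on $G_a$ indeed collapses to $a^{n+m}/(a^m-1)$, minimized at $a^m=(n+m)/n$.

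\medskip
\noindent\textbf{Comparison with the paper.} The paper avoids working with a residue-class sum in the numerator by aggregating over all $m$ processors. It considers $m$ interruption points $G_{l+n+m}^-$ for $l=k,\ldots,k+m-1$, bounds $\alpha(X)$ below by each ratio $G_{l+n+m}/x_{l+m}$, and then applies the mediant inequality $\max_i(a_i/b_i)\ge(\sum a_i)/(\sum b_i)$. Because the $m$ processors partition all indices, the aggregated numerator becomes the full prefix sum $\sum_{l=0}^{k+n+2m-1}x_l$ and the denominator becomes $\sum_{l=k+m}^{k+2m-1}x_l$. This functional is exactly of the standard Gal form, so conditions (1)--(5) and the evaluation on $G_a$ are immediate. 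Your single-processor functional still satisfies the conditions (your check of condition~(5) is fine: the shift by one in $X^{+1}$ matches the shift of the residue class), but the numerator is a sparser sum and the initial inequality needs the pairing argument above rather than the description you gave. Both routes land on $a^{n+m}/(a^m-1)$; the paper's mediant trick buys a cleaner functional at the cost of one extra inequality, while yours is more direct once the gap is repaired.
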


\begin{proof}[Alternative proof of Theorem~\ref{thm:steins}] 
Let $X=(x_0,x_1,\ldots)$ denote a given cyclic schedule. For a given $k \geq 0$, let $p_k$ denote the processor to which contract $x_{k+n+m}$ is assigned,
and let the set $I_k$ denote the indices of contracts scheduled in $p_k$ and completed up to, and including the completion time of $x_{k+n+m}$. 
We have 
\begin{equation}
\alpha(X) \geq \max_{l\in [k,k+m-1]} \frac{\sum_{i \in I_l} x_i}{x_{l+m}}.
\label{eq:lower:acceleration.1}
\end{equation}
This is because the contract $x_{l+n+m}$ is completed at time $\sum_{i \in I_l} x_i$, and because right before this time, the longest contract
completed for problem $l$ has length $x_{l+m}$. The latter follows from the length-increasing and problem-round-robin properties
of the schedule.

Using the property $\max(a/c,b/d) \geq (a+b)/(c+d)$, for $a,b,c,d>0$,~\eqref{eq:lower:acceleration.1} gives
\begin{equation}
\alpha(X) \geq  \frac{\sum_{l=k}^{k+m-1}\sum_{i \in I_l} x_i}{\sum_{l=k}^{k+m-1}x_{l+m}}
\label{eq:lower:acceleration.2}
\end{equation}
Due to the processor-round-robin property of $X$, we have that 
\[
\sum_{l=k}^{k+m-1}\sum_{i \in I_l} x_i=\sum_{l=0}^{k+n+2m-1} x_l,
\]
and since~\eqref{eq:lower:acceleration.2} holds for all $k\geq 0$, we obtain that  
\[
\alpha(X) \geq  \sup_{0 \leq k < \infty} \frac{\sum_{l=0}^{k+n+2m-1} x_l}{\sum_{l=k+m}^{k+2m-1}x_{l}}.
\]
Define now the functional $F_k(X)=\frac{\sum_{l=0}^{k+n+2m-1} x_l}{\sum_{l=k+m}^{k+2m-1}x_{l}}$. Once again, it is easy to verify that 
$F_k(X)$ satisfies the conditions of Theorem~\ref{thm:limit}. Therefore, we conclude that there is $a>0$ such that 
\begin{equation}
\alpha(X) \geq \sup_{0 \leq k < \infty} \frac{\sum_{l=0}^{k+n+2m-1} a^l}{\sum_{l=k+m}^{k+2m-1}a^{l}}, \ \textrm{for some } a\geq 0.
\label{eq:lower:acceleration.4}
\end{equation}
Similar to the proof of Theorem~\ref{thm:roundrobin}, it is easy to see that if $a\leq1$, then the RHS of~\eqref{eq:lower:acceleration.4} is
$\infty$. We can thus assume that $a>1$, in which case we obtain that
\begin{eqnarray}
 \alpha(X) &\geq& \sup_{0 \leq k < \infty} \frac{a^{k+n+2m}-1}{a^{k+m}(a^m-1)} \nonumber \\
 &=& \sup_{0 \leq k < \infty}\frac{a^{n+m}-\frac{1}{a^{k+m}}}{a^m-1}=
 \frac{a^{n+m}}{a^m-1}.
 \label{eq:lower:acceleration.5}
 \end{eqnarray}
Last, the expression $\frac{a^{n+m}}{a^m-1}$ attains its minimum at $a=((m+n)/n)^{1/m}$. For this value of $a$,~\eqref{eq:lower:acceleration.5}
gives 
\[
\alpha(X) \geq \frac{n}{m}(\frac{n+m}{n})^\frac{n+m}{m},
\]
which concludes the proof.
\end{proof}

\subsection{Schedule deficiency for $n\in\{1,2\}$, $m=1$}
\label{subsec:lower.two}

In this section we will show improved lower bounds on the deficiency of schedules that involve two problems and a single processor. 
First, we argue that for $n=1$, the exponential schedule \expo \ is optimal. 
Second, we show how to obtain a lower bound on the deficiency of a schedule for $n=2$ that improves upon the lower bound of 
Theorem~\ref{thm:lower.general}. Namely, we obtain a lower bound of 2.115 whereas Theorem~\ref{thm:lower.general} gives a lower bound of only  1.5. 
Note that the corresponding upper bound due to \expo \ is 2.598. 

We can assume, without loss of generality, that for any given problem $p$, if an optimal schedule begins the execution
of two contracts for problem $p$ at times $t_1,t_2$, with $t_1<t_2$, then the length of the contract starting at time $t_1$ is
strictly smaller than the length of the contract starting at time $t_2$ (otherwise, the contract starting at time $t_2$ may be omitted 
altogether from the schedule, without increasing its deficiency).

\begin{proposition}
For $n=1$ and $m=1$, \expo \  has optimal deficiency.
\label{thm:exp.optimal.n.equals.1}
\end{proposition}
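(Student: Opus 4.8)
The plan is to reduce the problem to the single-problem, single-processor case studied in prior work, and to use the fact — established in the excerpt — that the deficiency of any single-processor schedule can be evaluated by looking only at times just before a contract completes. For $n=1$ and $m=1$, a schedule $X$ is simply a sequence of contract lengths $x_0, x_1, \ldots$ all for the same problem, executed one after another on the single processor. By the normalization remark preceding the statement, we may assume $x_0 < x_1 < x_2 < \cdots$, since any non-increasing contract can be dropped without increasing the deficiency. With $n=1$, the set $S_X^t$ contains a single element, namely the longest completed contract $l_{0,t}$, and $OPT(S_X^t)$ is just that length (one job, one processor). So from Corollary~\ref{corollary:deficiency.formula} and~\eqref{eq:def}, $\defi(X) = \sup_t \, t / l_{0,t}$, which is exactly the acceleration ratio $\alpha(X)$ in the $n=m=1$ case.

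First I would make this identification explicit: for $n=m=1$, $\defi(X) = \alpha(X)$. Then I would invoke the known optimality result for the $n=m=1$ acceleration ratio — the classical result of Russell and Zilberstein and its matching lower bound by Zilberstein \emph{et al.}~\cite{ZilbersteinCC03} (equivalently, the online bidding lower bound~\cite{ChrKen06}) — which states that the optimal acceleration ratio for one problem on one processor is $4$, achieved by the doubling schedule $x_i = 2^i$. I would then check that $\expo$ with $n=1$ is precisely this optimal schedule: by Corollary~\ref{cor:def.best.expo}, the best exponential schedule for $n=1$ has deficiency at most $(n+1)^{(n+1)/n}/n = 2^2/1 = 4$, with optimal base $b = (n+1)^{1/n} = 2$, i.e. $x_i = 2^i$. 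Hence $\defi(\expo) = 4$ and no schedule (exponential or not) can do better.

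Alternatively, to keep the argument self-contained within the paper's framework, I could run the Gal–Schuierer machinery of Theorem~\ref{thm:limit} directly: for a round-robin (here, trivially, every schedule is round-robin since $n=1$) schedule on one processor, define $F_k(X) = \left(\sum_{j=0}^{k+1} x_j\right) / x_k$, verify the five functional conditions as in the proof of Theorem~\ref{thm:roundrobin}, deduce $\defi(X) \geq \sup_k (a^{k+2}-1)/(a^k(a-1))$ for some $a$, argue $a > 1$ is forced, simplify the supremum to $a^2/(a-1)$, minimize over $a$ to get $a=2$ and value $4$. This is literally the $n=1$ specialization of Theorem~\ref{thm:roundrobin}, whose statement already gives $\defi(X) \geq (n+1)^{(n+1)/n}/n = 4$ when $n=1$ — so in fact the proposition is an immediate corollary of Theorem~\ref{thm:roundrobin} combined with Corollary~\ref{cor:def.best.expo}, since for $n=1$ \emph{every} schedule is a round-robin schedule.

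There is no real obstacle here: the only thing to be careful about is confirming that the $n=1$ case genuinely collapses the general deficiency definition to the acceleration ratio (so that the offline solution $Y$ with one contract of length $t$ is the unique relevant competitor, and $Y \geq X$ holds trivially since $t \geq l_{0,t}$), and observing that with a single problem the round-robin constraint is vacuous so Theorem~\ref{thm:roundrobin} applies to all schedules. Given those two observations, the matching upper bound from Corollary~\ref{cor:def.best.expo} closes the argument in a couple of lines.
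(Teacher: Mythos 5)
Your proposal is correct and its main line of argument is essentially the paper's own proof: for $n=m=1$ the deficiency definition collapses to the acceleration ratio, and the classical result that the doubling schedule (which is exactly \expo\ with base $b=(n+1)^{1/n}=2$ and value $4$) is optimal for the acceleration ratio closes the argument. Your alternative, self-contained route—observing that for $n=1$ every schedule is trivially round-robin, so the lower bound of Theorem~\ref{thm:roundrobin} applies to all schedules and matches Corollary~\ref{cor:def.best.expo}—is also valid, but it is not the route the paper takes.
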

\begin{proof}
Let $X=(x_0,x_1,\ldots)$ denote a given schedule. From~\eqref{eq:def}, we have 
\[
\defi(X)=\sup_{t}\max_p \frac{t}{l_{p,t}}.
\]
Thus, the deficiency of $X$ is identical to its acceleration ratio $\alpha(X)$. 
From~\cite{RZ.1991.composing}, we know that the optimal
acceleration ratio equals 4, and is achieved by an exponential schedule which is precisely $X$.
\end{proof}

Next, we consider the setting $n=2$, $m=1$. We will first prove the existence of optimal schedules with a useful property.
Informally, the property states that there exists an optimal schedule such that any time a new contract is about to be scheduled,
the problem that has been worked the least will be chosen. We will call such optimal schedules {\em normalized}. The next lemma
proves this property for general $n$, assuming a single processor.

\begin{lemma}
For scheduling contracts for $n$ problems in a single processor, there exists a schedule of optimal deficiency 
which satisfies the following property: If at time $T$ a new contract
is about to be started, then it is assigned to a problem $i$ which minimizes $l_{i,T}$.
\label{lemma:normalization}
\end{lemma}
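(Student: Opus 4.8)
The plan is to take an arbitrary schedule $X$ of optimal deficiency and perform an exchange/relabeling argument that reorders the problem-assignments of its contracts so that, at every start time, the contract goes to the least-worked problem, without ever increasing the deficiency. Concretely, I would first fix the sequence of contract \emph{lengths} and \emph{start times} exactly as they appear in $X$ — only the labels (which problem each contract belongs to) will be modified. Let $t_0 < t_1 < \cdots$ be the start times of the contracts in $X$, with lengths $x_0, x_1, \ldots$; I would process these contracts in chronological order and greedily assign contract $j$ to whichever problem currently has the smallest value of $l_{i, t_j}$ (breaking ties arbitrarily), obtaining a new schedule $X'$.

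The heart of the argument is to show $\defi(X') \le \defi(X)$. By equation~\eqref{eq:def}, it suffices to show that for every time $t$ in $T_{X'}$ (equivalently $T_X$, since the set of completion times is unchanged — only labels moved), we have $\sum_{i=0}^{n-1} l_{i,t}(X') \ge \sum_{i=0}^{n-1} l_{i,t}(X)$; that is, the relabeling only ever helps the denominator. The key structural claim is a majorization-type statement: for every time $t$, the multiset $S_{X'}^t = \{\, l(X',i,t) : i\in[0,n-1]\,\}$ dominates $S_X^t$ in the sense that, when both are sorted in increasing order, each entry of $S_{X'}^t$ is at least the corresponding entry of $S_X^t$. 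I would prove this by induction over the completion times. The inductive step relies on the following observation about the greedy rule: when contract $j$ (of length $x_j$, starting at $t_j$) completes, in $X'$ it replaces the \emph{current minimum} entry $S_{X'}^{t_j^-}(1)$, whereas in $X$ the same contract length $x_j$ replaces \emph{some} entry of $S_X^{t_j^-}$ which is at least $S_X^{t_j^-}(1) \le S_{X'}^{t_j^-}(1)$ (the last inequality by the inductive hypothesis applied at $t_j^-$). Replacing a smaller element of a sorted multiset with a fixed value $x_j$ produces a sorted multiset that dominates the one obtained by replacing a larger element; combined with the inductive hypothesis this yields domination at time $t_j$, and hence at all times up to the next completion. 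Summing over the $n$ coordinates gives the needed inequality on $\sum_i l_{i,t}$, and since numerators $t$ are identical, $\defi(X',t)\le\defi(X,t)$ for all $t$, so $\defi(X')\le\defi(X)=\defi(X)$, meaning $X'$ is also optimal.

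The main obstacle is being careful about the well-definedness of the greedy rule and the bookkeeping that a problem whose ``longest completed contract'' is being updated is indeed the one holding the minimum — one must confirm that the canonical assumption (at least one contract per problem completed before the relevant interruption) can be maintained, and that replacing the minimum is legitimate even when the incoming length $x_j$ is itself small (it may not actually raise that coordinate, but it never lowers it, which is all that's needed). A secondary subtlety is the contract-length monotonicity assumption stated just before the lemma: I would invoke it to discard, in $X'$, any contract that turns out to be no longer than an already-completed contract for the problem it gets assigned to, which keeps the construction consistent with the normal form and does not increase deficiency. Once domination is established the rest is the one-line summation argument above, so the proof is short modulo this inductive exchange lemma.
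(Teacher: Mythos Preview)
Your approach is correct and genuinely different from the paper's. The paper argues by a \emph{local} two-label swap: it takes the first time $T$ at which the property fails (the contract goes to $j$ while $i$ has the smaller $l$-value), swaps the $i$/$j$ labels on all contracts from $T$ onward, and checks directly that $\defi$ does not increase on the only time window $[T+L,\,T'+L')$ where the two schedules differ; iterating pushes the first violation forward indefinitely. Your argument is a \emph{global} one-shot relabeling to the greedy schedule, justified by a majorization invariant: the sorted vector $S_{X'}^t$ coordinate-wise dominates $S_X^t$ at every time, which is strictly stronger than the sum inequality you actually need for~\eqref{eq:def}.

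Two comments on the write-up. First, the sentence ``replacing a smaller element \ldots\ dominates the one obtained by replacing a larger element; combined with the inductive hypothesis this yields domination'' compresses two distinct facts you should state separately: (i) within a single multiset $B$, replacing $b_1$ by $\max(b_1,x)$ dominates replacing any $b_r$ by $\max(b_r,x)$; and (ii) if sorted $A$ dominates sorted $B$, then replacing each minimum by $\max(\min,x)$ preserves domination. Both are easy (the cleanest route is the threshold characterization: $A$ dominates $B$ iff $|\{a\in A: a\ge z\}|\ge |\{b\in B: b\ge z\}|$ for all $z$), but they are the actual content of the induction and deserve explicit proofs. Second, the worry about the canonical ``one contract per problem completed'' assumption is harmless here: your greedy rule assigns the first $n$ contracts to $n$ distinct problems, so $X'$ covers all problems no later than $X$ does. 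What each approach buys: the paper's swap is shorter and needs no auxiliary lemma, but must be iterated; your approach avoids iteration and yields the stronger domination statement, which could be reused elsewhere.
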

\begin{proof}
Let $X$ denote an optimal schedule, and suppose, by way of contradiction, that $T$ is the first time in which a contract
is about to begin its execution in $X$ that does not satisfy the property. More precisely, suppose that at time $T$, $X$ executes a contract
for a problem $j\neq i$ such that $l_{j,T}>l_{i,T}$. We will show that there exists a schedule $X'$ such that 
$X'$ is also an optimal schedule, and it satisfies the property for all $t \leq T$. By successively applying a series of such 
transformations, we obtain an optimal schedule that satisfies the property. 

We define a schedule $X'$ which is identical to $X$, with the exception that all contracts in $X$ which are assigned to problems 
$i$ and $j$ (defined as above) and which start their execution at time $t>T$ will switch problem assignment. In other words, a contract
that was assigned to problem $i$ will now be assigned to problem $j$, and vice versa. Let $T'$ be the first time such that $T'>T$
and a contract for problem $i$ is executed in $X$. Moreover, let $L$ and $L'$ denote the lengths of the two contracts for problems
$j$ and $i$ that begin their executions at times $T$ and $T'$, respectively, in $X$. From the definition of $X$
and~\eqref{eq:def}, we observe that for all $t$ such that $t< T+L$ or $t\geq T'+L'$, we have that $\defi(X',t)=\defi(X,t)$.

It thus remains to show that  $\defi(X',t)\leq \defi(X,t)$ for all $t\in [T+L,T'+L')$. For any such $t$, 
we have that
\[
\defi(X,t)=\frac{t}{\sum_{k \notin\{i,j\}} l_{k,t}+l_{j,t}+l_{i,T}},
\]
whereas 
\[
\defi(X',t)=\frac{t}{\sum_{k \notin\{i,j\}} l_{k,t}+l_{j,t}+l_{j,T}},
\]
Since $l_{i,T}<l_{j,T}$, it follows that $\defi(X',t)<\defi(X,t)$, for all $t\in [T+L,T'+L')$, which concludes the proof.
\end{proof}

\begin{lemma}
For $n=2$ and $m=1$ there is an optimal normalized schedule in which at most two contracts per problem are scheduled 
consecutively. 
\label{lemma:two.problems}
\end{lemma}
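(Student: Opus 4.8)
The plan is to start from the normalized optimal schedule guaranteed by Lemma~\ref{lemma:normalization}, and argue that any run of three or more consecutive contracts devoted to the same problem can be shortened without harming the deficiency. First I would set up notation: fix an optimal normalized schedule $X=(x_0,x_1,\ldots)$ for $n=2$, $m=1$, and suppose toward a contradiction that at some point three consecutive contracts, say $x_{i},x_{i+1},x_{i+2}$, are all assigned to the same problem $p$; by the normalization property of Lemma~\ref{lemma:normalization}, the fact that $x_{i+1}$ is assigned to $p$ means $l_{p,T_{i+1}}\le l_{\bar p,T_{i+1}}$ where $T_{i+1}$ is the start time of $x_{i+1}$ and $\bar p$ is the other problem, and similarly for $x_{i+2}$. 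The key observation is that between the start of $x_i$ and the start of $x_{i+2}$, problem $\bar p$ makes no progress at all, so its completed-contract length $l_{\bar p,\cdot}$ is frozen on this whole interval; meanwhile $p$ accumulates two fresh contracts in a row. Intuitively this is wasteful: the first of the two consecutive contracts for $p$, namely $x_{i+1}$, is dominated by $x_{i+2}$ (recall $x_{i+1}<x_{i+2}$ by the WLOG assumption made just before Proposition~\ref{thm:exp.optimal.n.equals.1}), so completing $x_{i+1}$ contributes nothing lasting to the deficiency ratio, and the idle time it consumes only delays the genuinely useful contract $x_{i+2}$.

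The main step is then a surgical modification: delete the contract $x_{i+1}$ (the middle one of the three, which is the redundant intermediate contract for $p$) and slide everything after it earlier by $x_{i+1}$ in time, obtaining a schedule $X'$. I would check that $X'$ is still normalized — deleting a dominated contract and shifting cannot create a violation of the ``work the least-worked problem'' rule, since the set of completed contracts per problem at any time in $X'$ is a subset (shifted earlier) of what it was in $X$ — or, if a violation is technically introduced, reapply Lemma~\ref{lemma:normalization} to re-normalize without increasing deficiency. Then I would compare $\defi(X',t')$ against $\defi(X,t)$ via the formula in~\eqref{eq:def}. For interruption times before the start of $x_{i+1}$ nothing changes. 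For an interruption time $t$ in $X$ with $T_{i+1}\le t$, the corresponding time in $X'$ is $t'=t-x_{i+1}<t$, and the denominator $\sum_j l_{j,\cdot}$ is unchanged for all $t$ past the completion of $x_{i+2}$ (because $x_{i+1}$ never was the largest contract for $p$), while for $t$ strictly between the completion of $x_{i+1}$ and the completion of $x_{i+2}$ we only lose the term $x_{i+1}$ from the denominator but also subtract $x_{i+1}$ from the numerator — and since numerator exceeds denominator (deficiency $\ge 1$), subtracting the same quantity from both can only decrease the ratio. Hence $\defi(X',t')\le\defi(X,t)$ everywhere, so $X'$ is also optimal, contradicting nothing yet — but now $X'$ has strictly fewer ``triples''. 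Iterating (or taking a limit argument on the sequence of such transformations, using that contract lengths are bounded below away from zero on any finite prefix) yields an optimal normalized schedule with no three consecutive contracts on the same problem, i.e. at most two consecutive contracts per problem.

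The step I expect to be the main obstacle is handling the interruption times that fall inside the deleted region and, more subtly, the possibility that $x_i$ itself was the largest contract for $p$ at some point — so one must be careful about which of the three consecutive contracts is safe to delete. The clean choice is to delete the \emph{second} contract in the run: it is always strictly dominated by the third (by the WLOG monotonicity assumption), so its removal never changes any $l_{p,t}$ value for $t$ at or after the third contract's completion. A secondary technical point is the infinite-iteration/limit argument: I would phrase it as ``for each prefix length $N$, finitely many transformations suffice to remove all triples among the first $N$ contracts, and the transformations on later contracts do not disturb this prefix'', then take a diagonal limit, noting that each transformation is non-increasing in deficiency so the limiting schedule $X^\ast$ satisfies $\defi(X^\ast)\le\defi(X)$ and is normalized with at most two consecutive contracts per problem. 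This gives the lemma.
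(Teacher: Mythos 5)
There is a genuine gap in the deletion step, which is exactly the hard part of this lemma. Your claim that removing the middle contract $x_{i+1}$ of a run $x_i,x_{i+1},x_{i+2}$ never increases the deficiency is not established, and the justification fails on two counts. First, the bookkeeping: for an interruption that falls (in $X$) between the completions of $x_{i+1}$ and $x_{i+2}$, the denominator of~\eqref{eq:def} does not ``lose the term $x_{i+1}$''; rather, the largest completed contract for $p$ drops from $x_{i+1}$ back to $x_i$, so the denominator changes from $x_{i+1}+\ell$ to $x_i+\ell$ (with $\ell$ the frozen value for the other problem), a decrease of only $x_{i+1}-x_i$, while the numerator decreases by $x_{i+1}$. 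Second, even under your accounting the monotonicity is backwards: if $N\geq D>c>0$ then $(N-c)/(D-c)\geq N/D$, i.e.\ subtracting the same quantity from the numerator and denominator of a ratio that is at least $1$ \emph{increases} it. Concretely, the critical comparison is between $\frac{T+x_i+x_{i+2}}{x_i+\ell}$ (after deletion) and $\max\bigl(\frac{T+x_i+x_{i+1}}{x_i+\ell},\ \frac{T+x_i+x_{i+1}+x_{i+2}}{x_{i+1}+\ell}\bigr)$ (before deletion), where $T$ is the start time of $x_i$; the former can strictly exceed the latter when $T$ is large relative to the contract lengths (for instance $T=4$, $x_i=1$, $x_{i+1}=\ell=2$, $x_{i+2}=2.1$, which is realizable after a normalized prefix), so the surgery is not always harmless, and normalization alone does not rescue it.

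This is precisely where the paper's proof does its real work: it does not assert that removal is always safe, but proves a dichotomy. Either removing one contract of the consecutive pair does not increase the deficiency (the comparison $Q_{X'}\leq Q_X$), or else the inequalities forced by assuming the contrary imply $x_{i+1}\geq l_1$; in that second case the normalization property of Lemma~\ref{lemma:normalization} forces the next contract to be assigned to the other problem, so a third consecutive contract for the same problem never occurs. Note that the paper's case analysis crucially uses the normalized-schedule inequalities $\lambda_1\leq l_0\leq l_1$ and $x_i\leq l_1$ inside the ratio comparison, whereas your argument never invokes them at the point where they are needed. Your observations about which contract of a run is safest to delete and about iterating the transformation are reasonable but secondary; without the dichotomy (or some substitute use of optimality/normalization in the critical comparison), the proof does not go through.
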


\begin{proof}
Let $X$ denote a normalized schedule. Let $t$ denote the earliest time in $X$ such that two consecutive contracts, say for problem $p=0$,
are about to be scheduled, and let $x_i,x_{i+1}$ denote their lengths respectively ($x_{i+1}>x_i$). Let also $l_0$, $l_1$, denote the 
largest contracts for problems $0,1$, respectively, that are completed in $X$ by time $t$, and $\lambda_1$ the largest contract for problem
1 completed by time $t^-$, i.e., infinitesimally smaller than $t$. Since $X$ is normalized, we have that $\lambda_1 \leq l_0 \leq l_1$, 
and $x_i\leq l_1$.

Consider the schedule $X'$ that is obtained by $X$ by omitting the contract $x_{i+1}$. Namely, all other contracts in $X'$ are scheduled in the 
same order, with no idle time (see Figure~\ref{fig:normalization} for an illustration). Note that $X'$ remains normalized. 
We will show that at least one of the following properties 
hold: i) $\defi(X') \leq \defi(X)$; or ii) $x_{i+1} \geq l_1$.
If the first property holds, this implies that we can effectively remove the first of the successively scheduled contracts without affecting
the optimality of the schedule. If the second property holds, since $X$ is normalized, this means that the contract to be scheduled right after 
$x_{i+1}$ in $X$ will be a contract for problem 1. The combination of these observations implies the lemma. 

\begin{figure}[htb!]
\centering
\includegraphics[scale=0.25]{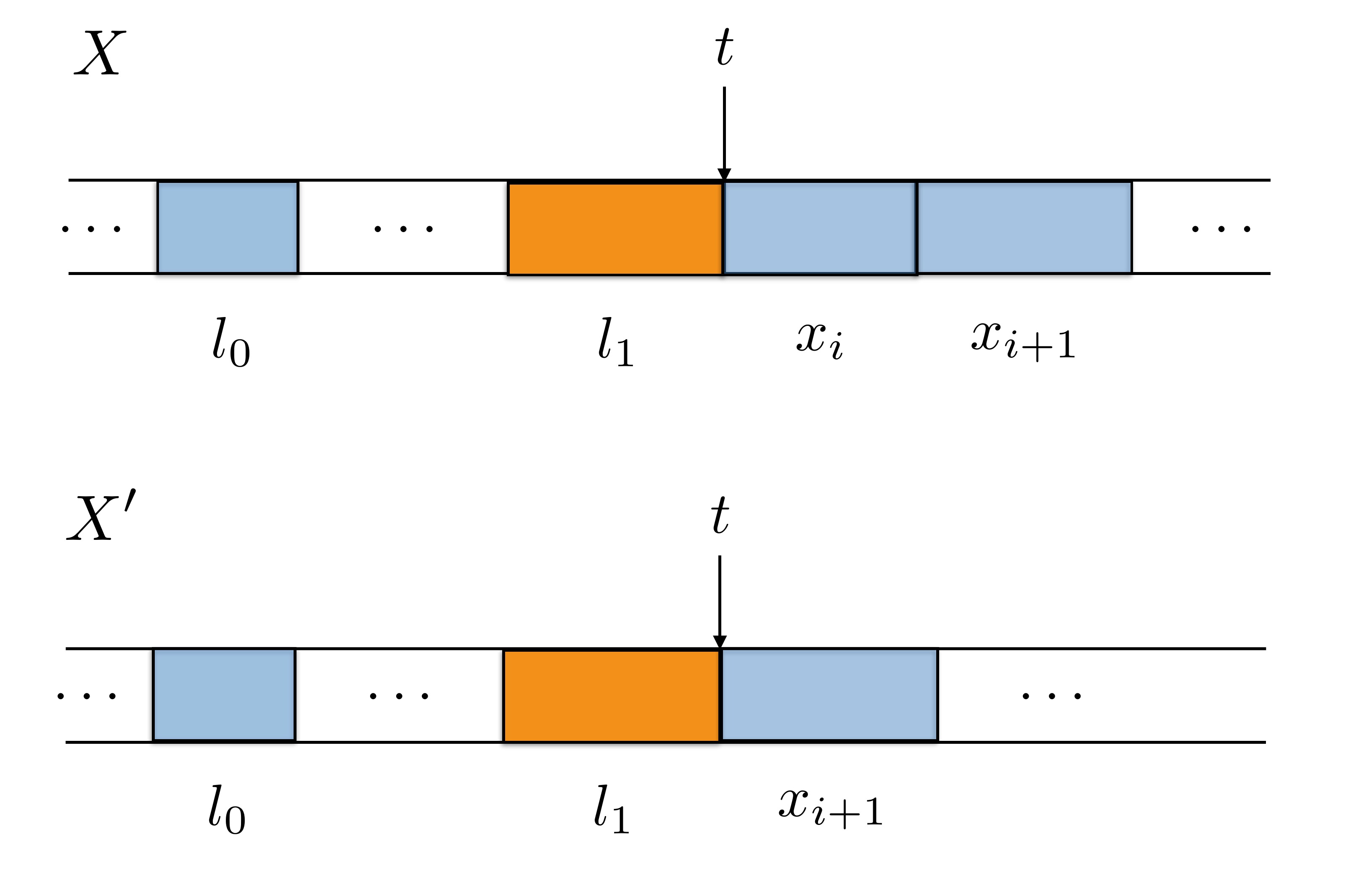}
\caption{An illustration of schedules $X$ (top) and $X'$ (bottom). $X'$ is derived by effectively removing $x_i$ from $X$.}
\label{fig:normalization}
\end{figure}


We thus must compare the deficiency $X$ to that of $X'$. To this end, note that for $X$ one needs only to consider interruptions that occur at $t^-$, as well as right before contracts $x_i,x_{i+1}$ are completed, whereas for $X'$, one needs to consider interruptions 
at time $t^-$ and right before contract $x_{i+1}$ is completed (in $X'$). All other
interruptions have the same, or better contribution to the deficiency in $X'$ than in $X$. Define $Q_X$ and $Q_{X'}$ as
\begin{eqnarray*}
Q_X&=&\max (\frac{t}{l_0+\lambda_1}, \frac{t+x_i}{l_0+l_1}, \frac{t+x_i+x_{i+1}}{x_i+l_1}) \ \textrm{ and } \nonumber \\ 
Q_{X'}&=&\max ( \frac{t}{l_0+\lambda_1}, \frac{t+x_{i+1}}{l_0+l_1}),
\end{eqnarray*}
then it follows that if $Q_{X'} \leq Q_X$ then $\defi(X') \leq \defi(X)$.

First, we observe that if $\frac{t}{l_0+\lambda_1} \geq \frac{t+x_{i+1}}{l_0+l_1}$, then  $Q_{X'} \leq Q_X$ and thus $\defi(X') \leq \defi(X)$. Thus we may assume that
$\frac{t}{l_0+\lambda_1} < \frac{t+x_{i+1}}{l_0+l_1}$. Combined with $\lambda_1\leq l_0$, this implies that 
$t<\frac{2l_0}{l_1-l_0}x_{i+1}$, therefore we obtain
\begin{equation}
t+x_{i+1} < \frac{l_0+l_1}{l_1-l_0} x_{i+1}.
\label{eq:2.problems.1}
\end{equation}
We also observe that if $\frac{t+x_i+x_{i+1}}{x_i+l_1} \geq \frac{t+x_{i+1}}{l_0+l_1}$, then again $Q_{X'} \leq Q_X$ 
and thus $\defi(X') \leq \defi(X)$.
Thus we may also assume that $\frac{t+x_i+x_{i+1}}{x_i+l_1} < \frac{t+x_{i+1}}{l_0+l_1}$, from which we obtain that 
\begin{equation}
t+x_{i+1} > \frac{x_i(l_0+l_1)}{x_i-l_0}>\frac{l_1(l_0+l_1)}{l_1-l_0}.
\label{eq:2.problems.2}
\end{equation}
From~\eqref{eq:2.problems.1} and~\eqref{eq:2.problems.2} it follows that
\[
\frac{l_1(l_0+l_1)}{l_1-l_0} <t+x_{i+1} < \frac{l_0+l_1}{l_1-l_0} x_{i+1},
\]
which in turn implies that $x_{i+1} > l_1$. This concludes the proof.  
\end{proof}

We will now use the property of the optimal schedules shown in Lemma~\ref{lemma:two.problems} in combination with Theorem~\ref{thm:limit} so as to obtain an improved lower bound on the deficiency of any schedule. 

\begin{theorem}
For $n=2$ and $m=1$, any schedule has deficiency at least 2.115.
\label{thm:exp.optimal.n.equals.2}
\end{theorem}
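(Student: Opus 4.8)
The plan is to combine the structural result of Lemma~\ref{lemma:two.problems} with the geometric-sequence reduction of Theorem~\ref{thm:limit}, exactly as in the proof of Theorem~\ref{thm:roundrobin}, but now applied to the more constrained class of normalized schedules in which each problem is served by at most two consecutive contracts. By Lemmas~\ref{lemma:normalization} and~\ref{lemma:two.problems} it suffices to lower-bound the deficiency over all such schedules $X = (x_0, x_1, \ldots)$, where the sequence of ``blocks'' alternates: each block consists of either one or two contracts for a fixed problem, and consecutive blocks serve the two problems alternately. First I would set up the bookkeeping for this block structure: write $b_0, b_1, b_2, \ldots$ for the successive blocks (block $b_j$ serving problem $j \bmod 2$), and let $s_j$ denote the total length of block $b_j$ and $\ell_j$ the length of its \emph{last} contract. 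Since $X$ is normalized, right before the first contract of block $b_j$ the largest completed contract for the problem served by $b_j$ is $\ell_{j-2}$, so the interruption time $t$ just before block $b_j$ starts yields $\sum_{p} l_{p,t} = \ell_{j-2} + \ell_{j-1}$, while $t = \sum_{i<j} s_i$. When a block has two contracts, one also gets the intermediate interruption just before the second contract of the block completes. This gives a family of lower bounds of the form $\defi(X) \geq (\sum_{i \leq j} s_i)/(\ell_{j-1}+\ell_j)$ and similar expressions accounting for the two-contract blocks.

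The next step is to argue that the worst case is attained when \emph{every} block has exactly two contracts, or to carry both cases and optimize; intuitively, splitting the ``budget'' into more, shorter contracts can only help the schedule, so the adversary's hardest instances use the maximum allowed number (two) of contracts per block — I would verify this by the same kind of local exchange argument used in Lemma~\ref{lemma:two.problems}, or simply note that the functional-based bound below already incorporates the full structure. Either way, I then define a functional $F_k(X)$ whose supremum over $k$ lower-bounds $\defi(X)$ — concretely, the ratio of the total schedule length up through the completion of the $(k+c)$-th contract to the sum of the two relevant ``current best'' contract lengths at that moment, where $c$ is the fixed offset dictated by the block structure (here $c$ will be a small constant like $2$, $3$, or $4$ depending on how many contracts of look-ahead the two-consecutive-contracts property forces). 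I would check the five conditions of Theorem~\ref{thm:limit} for $F_k$ — homogeneity, continuity, the max-subadditivity $F_k(X+Y) \le \max(F_k(X), F_k(Y))$, finite dependence, and the shift monotonicity — all of which are routine for ratios of partial sums, just as for the functional in Theorem~\ref{thm:roundrobin}.

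Having reduced to a geometric sequence $G_a = (1, a, a^2, \ldots)$, the problem becomes: for $a > 0$, evaluate $\sup_k F_k(G_a)$ and then minimize over $a$. As in the earlier proofs, $a = 1$ forces the ratio to grow without bound (the numerator is a partial sum of roughly $ck$ terms, the denominator is bounded), and $a < 1$ likewise blows up, so only $a > 1$ matters; there the geometric sums telescope and $\sup_k F_k(G_a)$ becomes an explicit elementary function of $a$ — something of the shape $a^{j}/(a^{j-1}(a-1) + \cdots)$ built from the block pattern — whose minimum over $a > 1$ is found by one-variable calculus and evaluates numerically to $2.115$. The main obstacle, and where the real work lies, is the first part: pinning down the exact block structure of the optimal normalized schedule precisely enough to write a single clean functional $F_k$ — in particular making sure the ``two consecutive contracts'' windows are handled so that the look-ahead offset $r$ in Theorem~\ref{thm:limit} is a fixed constant and conditions (1) and (5) genuinely hold, rather than getting an offset that drifts with $k$. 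Once the functional is correctly identified, the reduction to $G_a$ and the final optimization are mechanical. I would therefore spend most of the proof on the block-structure lemma and the verification of the Theorem~\ref{thm:limit} hypotheses, and dispatch the geometric optimization with a short calculus computation yielding the constant $2.115$.
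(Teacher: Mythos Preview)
Your overall arc---normalize via Lemmas~\ref{lemma:normalization} and~\ref{lemma:two.problems}, write down a functional, invoke Theorem~\ref{thm:limit}, and optimize over the base $a$---matches the paper's proof. But you are over-engineering the middle step, and in doing so you have not actually produced a usable functional.

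The difficulty you correctly flag---that the ``two relevant current best'' contracts depend on whether the preceding block had one or two contracts, so the denominator is sometimes $x_k+x_{k-1}$ and sometimes $x_k+x_{k-2}$, and hence the offset ``drifts with $k$''---is exactly the obstacle, and your proposal does not say how you will resolve it. The block bookkeeping with $s_j$, $\ell_j$, and a case split on block sizes does not help: Theorem~\ref{thm:limit} needs $F_k$ defined on the raw sequence $X$ with a \emph{fixed} look-ahead $r$, and your candidate denominator is not a fixed function of $x_{k-r},\ldots,x_{k+r}$. The paper sidesteps this entirely with a one-line observation: in \emph{both} cases the sum of the two current bests is at most $x_k+x_{k-1}+x_{k-2}$, so
\[
\defi(X)\ \geq\ \sup_{k}\ \frac{\sum_{j=0}^{k+1} x_j}{x_k+x_{k-1}+x_{k-2}}.
\]
This functional has fixed offset, satisfies the hypotheses of Theorem~\ref{thm:limit} immediately, and on $G_a$ collapses to $a^4/(a^3-1)$, minimized at $a=2^{2/3}$ with value $\approx 2.115$. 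No block structure, no case analysis on one-versus-two contracts, no exchange argument about ``worst case blocks''---just replace the exact two-term denominator by the three-term upper bound. That trade-off of sharpness for uniformity is the missing idea in your plan.
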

\begin{proof}
From Lemma~\ref{lemma:two.problems}, there exists a schedule $X$ of optimal deficiency such that at most two consecutive contracts are executed for each problem. Let $X=(x_0,x_1,\ldots)$ denote such a schedule. Consider an interruption right before contract $x_{k+1}$ is completed, and suppose, without loss of generality, that the said contract is for problem 0. Then, from Lemma~\ref{lemma:two.problems}, it follows that either $x_k$ and $x_{k-1}$
are the largest contracts for problems $0$ and $1$, respectively, that are completed by the interruption, or $x_k$ and $x_{k-2}$ are the largest such contracts for problems $1$ and $0$, respectively. Therefore, the deficiency of $X$ is at least
\[
\defi(X) \geq\sup_{k \geq 0} \frac{\sum_{j=0}^{k+1}x_j}{x_k+x_{k-1}+x_{k-2}}.
\]
Define now the functional $F_k(X)=\frac{\sum_{j=0}^{k+1}x_j}{x_k+x_{k-1}+x_{k-2}}$. Once again, it is easy to verify that 
$F_k(X)$ satisfies the conditions of Theorem~\ref{thm:limit}. Therefore, we conclude that there is $a>0$ such that 
\begin{equation}
\defi(X) \geq \sup_{k \geq 0} 
\frac{\sum_{j=0}^{k+1}a^j}{a^k+a^{k-1}+a^{k-2}}, \ \textrm{for some } a\geq 0.
\label{eq:n.1.a}
\end{equation}
Similar to the proof of Theorem~\ref{thm:roundrobin}, it is easy to see that if $a<1$, then the LHS of~\eqref{eq:n.1.a} is
$\infty$. We can thus assume that $a>1$, in which case we obtain that
\begin{eqnarray*}
\defi(X) &\geq&  \sup_{k \geq 0} \frac{\sum_{j=0}^{k+1}a^j}{a^k+a^{k-1}+a^{k-2}} \nonumber \\
&=&\sup_{k \geq 0} \frac{a^4-1/a^{k-2}}{(a-1)(1+a+a^2)}=
\frac{a^4}{a^3-1}.
\nonumber 
\end{eqnarray*}
Last, the expression $\frac{a^4}{a^3-1}$ attains its minimum at $a=2^{2/3}$. For this value of $a$, we have that 
$\defi(X) \geq 2.115$.
\end{proof}

\section{Conclusion}
\label{sec:conclusions}

In this paper we introduced and studied the problem of devising 
interruptible algorithms by means of schedules of contract algorithms, 
in a setting in which solutions
to all problem instances are required and all problems are equally important.
This generalizes the case where only one problem is queried at interruption time, and for which the standard performance measure is
the acceleration ratio. 
We introduced the deficiency of a schedule as a measure that reflects performance
in a more faithful manner than either the acceleration 
ratio or the performance ratio. We next presented a schedule whose deficiency we showed is bounded by
a small constant (at most 3.74 if $n\leq m$ and at most 4, if $n>m$; numerical evaluation provides even smaller values). 
Furthermore, for the case of one processor, we presented several lower bounds
on the deficiency of a schedule in a variety of settings, assuming a single processor. 

Even though our proposed solutions are efficient, it would be nevertheless interesting to know 
whether our schedule is theoretically optimal for any number of processors. This appears to be a challenging task. Even for the 
case of a single processor ($m=1$) and $n$ problems, the deficiency depends on the sum of $n$ contract lengths, and this
quantity appears in the denominator of the fractions that describe the deficiency~\eqref{eq:def}. This creates several technical
complications that make the application of tools such as Theorem~\ref{thm:limit} quite difficult. As shown in Lemma~\ref{lemma:two.problems},
it is possible that successive contracts for the same problem may be executed, which implies that Theorem~\ref{thm:limit} cannot 
yield an optimal lower bound, unless one derives further structural properties of the optimal schedule.

The situation becomes even more complicated for general $m$, since minimizing makespan is NP-hard. 
More importantly, one needs an upper bound on the makespan of a schedule (i.e., the output of a makespan-minimization algorithm)
in a closed-form expression as a function of the sizes of jobs. Graham's greedy algorithm yields such a closed-form expression, 
but it is no better than a 2-approximation. Thus, unlike the case $m=1$, it is not clear
how to express the exact deficiency even of simple schedules such as the exponential schedule. 

Another direction for further research is to consider the generalization in
which at interruption time a subset of $i \leq n$ problems are queried for their solution. 
We are also interested in a related setting in the context of robot searching in multiple concurrent rays. 
\cite{BFZ} and~\cite{spyros:rays} showed interesting 
connections between ray searching for a single target under the competitive ratio and contract scheduling under the acceleration ratio.  
Suppose, however, that instead of a single target, $i$ targets must be located, at unknown positions 
from the common origin. Suppose we have $m$ robots available to explore the rays. What is the best strategy
for the robots so as to locate the targets? There are interesting parallels between the two problems,
e.g., the optimal solution in the multi-target variant can be formulated as a makespan scheduling
problem, with job sizes a function of the distances of the targets from the origin. 
We believe the approach in this paper could be useful in addressing this problem.

\bibliographystyle{plain} 
\bibliography{deficiency}

\end{document}